\def\BState{\State\hskip-\ALG@thistlm}
\algnewcommand{\LineComment}[1]{\Statex \hskip\ALG@thistlm#1}
\def\ALG@step%
\newcommand{\tr}{\mbox{tr}}
\newcommand{\bhat}{\hat{\beta}}
\newcommand{\ataui}{a_{\tau}^{(i)}}
\newcommand{\atauione}{a_{\tau}^{(i-1)}}
\newcommand{\btaui}{b_{\tau}^{(i)}}
\newcommand{\btauione}{b_{\tau}^{(i-1)}}
\newcommand{\aji}{a_{j}^{(i)}}
\newcommand{\ajione}{a_{j}^{(i-1)}}
\newcommand{\bji}{b_{j}^{(i)}}
\newcommand{\bjione}{b_{j}^{(i-1)}}
\newcommand{\nui}{\nu^{(i)}}
\newcommand{\nuione}{\nu^{(i-1)}}
\newcommand{\Si}{S^{(i)}}
\newcommand{\Sione}{S^{(i-1)}}
\newcommand{\Omegajione}{\Omega_j^{(i-1)}}
\newcommand{\bhatji}{\hat{\beta}_j^{(i)}}
\newcommand{\bhatjione}{\hat{\beta}_j^{(i-1)}}
\newcommand{\mutaui}{\mu_{\tau^{-1}}^{(i)}}
\newcommand{\mutauione}{\mu_{\tau^{-1}}^{(i-1)}}
\newcommand{\musigmai}{\mu_{\Sigma^{-1}}^{(i)}}
\newcommand{\nuhat}{\hat{\nu}}
\newcommand{\Shat}{\hat{S}}
\newcommand{\Omegajhat}{\hat{\Omega}_j}
\newcommand{\atauhat}{\hat{a}_\tau}
\newcommand{\btauhat}{\hat{b}_\tau }
\newcommand{\bhatjn}{\hat{\beta}_j^{(n)}}
\newcommand{\Omegajn}{\Omega_j^{(n)}}
\newcommand{\mutaun}{\mu_{\tau^{-1}}^{(n)}}
\newcommand{\Sn}{S^{(n)}}
\newcommand{\nun}{\nu^{(n)}}
\newtheorem{lemma}{Lemma}
\begin{document}

{\renewcommand{\thefootnote}{\fnsymbol{footnote}}
\begin{center}
{\Large \bf Flexible online multivariate regression with variational Bayes and the matrix-variate
Dirichlet process \\}
\end{center}

\begin{center}
Meng Hwee Victor Ong, David J Nott
and Ajay Jasra\footnote{  
Victor Ong is Research Associate, Department of Statistics and Applied Probability, 
National University of Singapore, Singapore 117546.  (email : victor84@nus.edu.sg).  
David J. Nott is Associate Professor, Department of Statistics and Applied Probability,
National University of Singapore, Singapore 117546. (email :standj@nus.edu.sg).  
Ajay Jasra is Associate Professor, Department of Statistics and Applied Probability, 
National University of Singapore, Singapore 117546 (email : staja@nus.edu.sg).}
\footnote{Victor Ong, David J. Nott and Ajay Jasra were supported by a Singapore Ministry of Education Academic Research Fund Tier 2 grant (R-155-000-143-112).}
\end{center}

\vspace{3mm}

\centerline{SUMMARY} 

\vspace*{2mm}
\noindent
Flexible regression methods where interest centres on the way that the whole distribution of a response vector changes with covariates are very useful in some applications.  
A recently developed technique in this regard uses the matrix-variate Dirichlet process as a prior 
for a mixing distribution on a coefficient in a multivariate linear regression
model.  The method is attractive, particularly in the multivariate setting, for the convenient way that it allows for borrowing strength across different
component regressions and for its computational simplicity and tractability.  The purpose of the present article is to develop fast online variational Bayes approaches 
to fitting this model and to investigate how they perform compared to MCMC and batch variational methods in a number of scenarios.  

\vspace*{4mm}\noindent
{\it Keywords}:  Bayesian nonparametrics; Dirichlet process; Matrix-variate Dirichlet process; Variational Bayes. 
\section{Introduction}
\label{sec:intro}

Flexible modelling of multivariate conditional densities is a fundamental problem in statistics, particularly in regression applications in which there is interest
in the ways that the whole distribution of a response vector depends on covariates.   In a recent paper \citet{Zhang2010} developed a flexible
multivariate regression method using a Dirichlet process
prior for a mixing distribution on the coefficient in a multivariate linear model, where the Dirichlet process base prior is a matrix-variate normal distribution.  
The approach is attractive for its flexibility, the easy way it allows borrowing of strength between regressions for different response variables through the matrix-variate normal
base prior, and the computational
simplicity and convenience that comes from basing the method on the ordinary Dirichlet process.   
They refer to the Dirichlet process prior with matrix-variate normal base measure as the matrix-variate Dirichlet process (hereafter MDP), and further applications beyond
the multivariate linear regression setup were considered in \citet{zhang+wdj14}.  
The contribution of the present work is to consider fast online approaches to fitting the model
of \citet{Zhang2010} using variational Bayes methods, suitable for application in the context of large datasets. We also consider a novel approach to improving the
predictive performance of the online algorithm which gives performance comparable in many cases to a batch variational Bayes or MCMC approach.

In Bayesian nonparametrics, the development of suitable prior distributions for regression problems of the kind we consider here, 
involves the development of dependent prior distributions for sets of distributions indexed by the covariates.  A recent survey 
on the extensive literature on this topic is given by \citet{foti+w15}.  A key early paper is by \citet{maceachern00}, who introduced the framework of the
dependent Dirichlet process and which inspired many later developments.  
Some of the existing approaches in the literature include starting from the stick breaking representation of a random measure
and allowing distribution atoms or weights to be covariate dependent \citep{deiorio+mrm04,gelfand+km05,griffin+s06,dunson+p08}; consideration of covariate dependent generalizations
of the Chinese restaurant process or P\'{o}lya urn prediction rule \citep{blei+f11,caron+dd07}; as well as methods that build on normalized completely random measures \citep{kingman67,lijoi+p10} and which use their relationship with Poisson processes to introduce covariate dependence in various ways \citep{rao+t09,chen+rbt13,lijoi+np14}.  The 
above list of references is by no means exhaustive.  For the 
special case of grouped data, the hierarchical Dirichlet process \citep{teh+jbb06} has also been an extremely 
important development.

As mentioned, in the present work we consider the model of \citet{Zhang2010} which is attractive in the case of multivariate response for the convenient mechanism
it represents for borrowing strength across regressions for different components through the matrix-variate normal base prior.  Our objective is to 
develop fast online variational Bayes methods which allow the model of \citet{Zhang2010} to be applied with large datasets.  The approach adopted builds on
the VSUGS algorithm of \citet{zhang+nyj14} for Dirichlet process mixture models, which is a variational extension of the SUGS algorithm of \citet{wang+d11}.  
\citet{lin13} independently developed a similar algorithm to that of \citet{zhang+nyj14}.
The development of fast variational methods
for complex Bayesian nonparametric models has been a very active area of recent research, with an important early paper being \citet{blei+j06} where a batch
variational algorithm for fitting Dirichlet process mixture models was developed.  In the online setting, 
some recent contributions include \citet{wang+pb11} and \citet{bryant+s12} 
who consider online algorithms for the hierarchical Dirichlet process, and various methods inspired by the stochastic variational inference framework of 
\citet{hoffman+bwp13} (for example, \citet{wang+b12}).  \citet{tchumtchoua+d16} consider a fast online approach to fitting high-dimensional correlated data 
with a model incorporating some Bayesian nonparametric components;  their method is a variational Bayes algorithm which is similar in approach 
to methods originally developed by \citet{sato01}.  \citet{luts+bw14} consider online approaches to fitting semiparametric regression models in
the variational Bayes framework.  

The next section describes the matrix-variate Dirichlet process mixture model that is considered throughout the rest of the article.  In Section 3, a batch variational algorithm for the model is derived and then Section 4 discusses the VSUGS online algorithm which is able to work efficiently for very large datasets.  Section 5 discusses predictive inference and our novel regression adjustment approach.  Section 6 considers an application to weak informative prior selection, Section 7 considers predictive performance of the methods in some benchmark data sets and Section 8 concludes.  

\section{Matrix-variate Dirichlet process mixture model}\label{DPMMmodel}

We consider the matrix-variate Dirichlet process mixture model of \citet{Zhang2010}.  Specifically, let $y_i$, $i=1,\dots,n$ 
denote a collection of observed $m$-dimensional response vectors and $x_i$, $i=1,\dots,n$ denote corresponding $p$-dimensional
vectors of covariates.  A common flexible way to model the mean in a multivariate regression for the responses involves using some
basis expansion where, denoting the $j$th element of $y_i$ by $y_{ij}$, 
\begin{eqnarray}
 E(y_{ij}) & = & \beta_{0,j}+\sum_{r=1}^N\beta_{r,j}E_r(x_i)  \label{mvregn}
\end{eqnarray}
where $E_r(x)$, $r=1,\dots,N$ are basis functions and $\beta_j=(\beta_{0j},\dots,\beta_{Nj})^T$ are coefficients, $j=1,\dots,m$.  In motivating
their approach \citet{Zhang2010} discuss such a basis expansion, and consider setting $N=n$ and $E_r(x)=K(x,x_r)$ where $K(\cdot,\cdot)$ is a kernel
function so that the number of basis terms equals the number of observations.  Here we will be concerned with an online implementation of
their approach where $n$ is not known beforehand, so we will make a fixed choice of both $N$ and the basis functions $E_r(x)$, $r=1,\dots,N$.  We give
more details about this later.  

Write $\beta=[\beta_1,\dots,\beta_m]$ for the $(N+1)\times m$ matrix of regression coefficients and 
$E_i=(1,E_1(x_i),\dots,E_N(x_i))^T$.  Then if we assume i.i.d errors in the regression (\ref{mvregn}) we can write
$$y_i=\beta^T E_i+\epsilon_i$$
where the $\epsilon_i$ are the errors having mean $0$ and covariance matrix $\tau \Sigma$ say where $\tau>0$ is a scale parameter.  The reason for parametrizing the covariance
matrix in this way will become clear later when conjugate prior specifications are considered.  Flexible multivariate regression approaches using
basis expansions of this type have been considered by many authors.  The innovation of \citet{Zhang2010}  is to consider a model
in which the coefficient $\beta$ varies randomly between observations.  The distribution of this coefficient is estimated from the data, and 
is given a Dirichlet process prior with a matrix-variate normal distribution as the base measure.  
That is, the Dirichlet process with matrix-variate normal base measure is used as a prior on the mixing distribution for the coefficient.  
The clustering property of the Dirichlet process
ensures that many observations will share the same coefficient matrix and there is borrowing of strength both between observations and
responses in estimating the regression.  

Precisely, the model is 
\begin{align}
Y_i|E_i,\tilde{\beta}_i,\Sigma &\sim N(\tilde{\beta}_i^T E_i,\tau \Sigma)  \nonumber \\ 
\tilde{\beta}_i|Q & \sim Q \label{DPmodel} \\ 
Q|\alpha,M &\sim DP(\alpha,M) \nonumber
\end{align}
where $DP(\alpha,M)$ denotes the Dirichlet process with
precision parameter $\alpha$ and base measure $M$.  The base measure 
$M$ in the model is chosen to be a matrix-variate normal distribution $N_{N+1,m}(0,\Omega\otimes \Sigma)$. An $s\times t$ random matrix $Z$ has
a matrix-variate normal distribution $N_{s,t}(C,V\otimes W)$, where $C$ is an $s\times t$ matrix and $V$ and $W$ are $s\times s$ and $t\times t$ covariance
matrices respectively, if its density takes the form
$$p(Z)=(2\pi)^{-st/2}|V|^{-t/2}|W|^{-s/2}\exp\left(\tr\left(-\frac{1}{2}V^{-1}(Z-C)W^{-1}(Z-C)^T\right)\right).$$
In our model following \citet{Zhang2010} it will be assumed that $\Omega$ is diagonal, $\Omega=\mbox{diag}(\omega_1,\dots,\omega_{N+1})$ where $\omega_i\sim IG(a_i,b_i)$ with
$a_i$ and $b_i$ known.  Also, $\Sigma$ is inverse-Wishart with degrees of freedom $\nu$ and scale matrix $S$.  $\tau$ is given an inverse gamma prior $IG(a_\tau,b_\tau)$ with
$a_\tau$ and $b_\tau$ known.

The Dirichlet process puts all its mass on a countable collection of points so we can rewrite the model in the following way.  Let 
$\{\beta_i\}_{i=1}^\infty$ be the distinct values appearing in the sequence $\{\tilde{\beta}_i\}_{i=1}^\infty$ with the $\beta_i$ indexed according to their
order of occurrence in $\{\tilde{\beta}_i\}_{i=1}^\infty$.  We let $\delta_i$ be an integer valued variable with $\delta_i=j$ if $\tilde{\beta}_i=\beta_j$.  Write
$\delta_{1:i}=(\delta_1,\dots,\delta_{i})^T$.  Using the  P\'{o}lya urn representation for the Dirichlet process we can rewrite the model
in the form  
\begin{align}
\begin{array}{rl}
Y_i|E_i,\beta,\Sigma,\delta_i & \sim N(\beta_{\delta_i}^T E_i,\tau \Sigma)   \\
p(\delta,\beta) & =p(\delta)p(\beta)
\end{array} \label{polya}
\end{align}
where $p(\delta_{1:n})=p(\delta_{1:n}|\alpha)=\prod_{i=1}^n p(\delta_i|\delta_{1:i-1},\alpha)$, $p(\beta)=\prod_{i=1}^\infty p(\beta_i)$
with $p(\beta_i)$ the matrix-variate normal density $N_{N+1,m}(0,\Omega\otimes \Sigma)$, the priors on $\Omega$ and $\Sigma$ are the same
as before and the conditional densities $p(\delta_i|\delta_{1:i-1},\alpha)$ are defined by (using similar notation to \citet{Zhang2010})
$$p(\delta_i=j|\delta_{1:i-1},\alpha)=\left\{\begin{array}{ll}
  \frac{n_j^{(i)}}{\alpha+i-1} & \mbox{$j\in \{1,\dots,n_i\}$} \\
 \frac{\alpha}{\alpha+i-1} & \mbox{$j=n_i+1$} \end{array}\right.$$
where $n_j^{(i)}$ is the number of $\delta_k$, $k<i$ equal to $j$ and $n_i$ is the number of distinct $\beta_k$ appearing up to time $i-1$.  
For the purpose of developing our fast online variational approximation algorithm we will use a truncated Dirichlet process mixture model.  In this model
the sequence $\{\beta_i\}_{i=1}^\infty$ is truncated to $\{\beta_i\}_{i=1}^T$ where $T$ is the truncation point and
\begin{align}
p(\delta_i=j|\delta_{1:i-1},\alpha) & =\left\{\begin{array}{ll} \label{Polya}
  \frac{n_j^{(i)}+\alpha/T}{\alpha+i-1} & \mbox{$j\in \{1,\dots,n_i\}$} \\
 \frac{\alpha(1-n_i/T)}{\alpha+i-1} & \mbox{$j=n_i+1$} \end{array}\right. .
\end{align}
This is the model we discuss in what follows.

\section{Variational inference}\label{VBbatch}

Consider a Bayesian model with parameter $\xi$, prior $p(\xi)$ and  likelihood $p(y | \xi)$. Variational Bayes computational methods 
\citep{waterhouse+mr96,Jordan1999,attias00,ormerod+w10} attempt to approximate 
the posterior density $p(\xi|y)$ by a more tractable and manageable variational density $q (\xi)$, belonging to a convenient family. 
The choice of $q (\xi)$ within the approximating family is usually made by minimizing the KL divergence between $p(\xi|y)$ and $q (\xi)$. 
It can be shown that
\begin{align}\label{VB_KL}
\log p(y) &= \int \log \left( \frac{p(\xi) p(y | \xi)}{q(\xi)} \right) q(\xi) d\xi  + \int \log \left( \frac{q(\xi)}{p(\xi|y)} \right) q(\xi) d\xi
\end{align}
where $p(y)=\int p(\xi)p(y|\xi)\;d\xi$.  
The first and second terms on the RHS of \eqref{VB_KL} are the variational lower bound $\mathcal{L}$ (so-called because it forms a lower bound on $\log p(y)$) and the KL divergence between $q(\xi)$ and $q(\xi|y)$, respectively. From \eqref{VB_KL}, it is clear that minimizing the KL divergence is equivalent to maximizing $\mathcal{L}$.  For further
background see the references above.

Now, suppose that $\xi$ can be partitioned into $J$ subvectors, $\xi_1,..,\xi_J$. 
In variational Bayes, an approximating family for the posterior is considered where $q(\xi)$ is assumed to factorize as $\prod_{j=1}^J q(\xi_j)$.  
For each of the factors $q(\xi_j)$, the lower bound is maximized with the other factors held fixed by choosing $q(\xi_j)$ as 
\begin{align}
\hat{q} (\xi_j) \propto \exp \left\{ E_{-\xi_j} \log p(y|\xi) p(\xi) \right\} \label{VBmain}
\end{align}
where $E_{-\xi_j}$ denotes an expectation with respect to $\prod_{i \ne j} q (\xi_i)$. 
Expression (\ref{VBmain}) is the basis of a blockwise gradient descent algorithm for maximizing $\mathcal{L}$ where an initial choice is made for the factors and then each factor
is updated in turn with the others fixed at current values until convergence.  

One useful application of the variational approach is to approximate the posterior distribution of parameters in Bayesian nonparametric models. 
It is well known that there is usually no direct way to compute the posterior distribution in these models and that MCMC sampling methods for such models can be difficult and
computationally expensive.  These considerations motivated \citet{blei+j06} to consider a mean-field variational inference algorithm for Dirichlet process mixture models.  
Their approach can be implemented for the model of Section \ref{DPMMmodel}, since the approach of \citet{Zhang2010} is based on an ordinary Dirichlet process mixture model, and
we do implement such an approach later in our examples.  Since this is a straightforward application of the algorithm of \citet{blei+j06} we do not give further details
of their method here.  However, we develop an alternative batch variational Bayes algorithm which is also described in the next section.  
The algorithm of \citet{blei+j06} is based on the stick breaking representation of the Dirichlet process;  our alternative batch variational Bayes algorithm (like the later sequential
algorithm of Section 4) is based on the P\'{o}lya urn representation with the unknown mixing distribution integrated out.  Although the alternative batch algorithm
involves some further approximations, the purpose of developing this method is that it gives a batch algorithm similar to our later online approach, and 
provides another reference for comparison for the performance of the online algorithm where how much performance is lost through the sequential updating mechanism can
be better understood.  Also, many of the updating steps in the online algorithm are simple modifications of the corresponding steps for the batch algorithm.


\subsection{Batch mean field updates for global parameters}

We work with the model (\ref{polya}).
For the matrix-variate DP mixture model, variational inferences for the parameters $\beta_{1:T}, \Sigma, \tau, \omega_{1:N}, \delta_{1:n}$ are required. Define $\theta = ( \beta_{1:T}, \Sigma, \tau, \omega_{1:N} )^T$ and $\theta_j =  ( \beta_{j}, \Sigma, \tau, \omega_{1:N} )^T$ for $j=1,..,T$. In deriving approximate mean field updates we consider a slight expansion of the model (\ref{polya}). This will be helpful when discussing the online case later, since in our model expansion the variational posterior has the same form as the prior leading to a natural online implementation.  
In the expanded model the prior on $\beta_i$ is changed from $N_{N+1,m} (0,\Omega\otimes \Sigma)$ to $N_{N+1,m}(M_i,\Omega_i\otimes \Sigma)$ with $\Omega_i=(\Omega^{-1}+C_i)^{-1}$ where $M_i$ and $C_i$ are known matrices. Letting $\beta=(\beta_1,\dots,\beta_T)$, we consider the following factorization for the variational posterior distribution:
$$q(\theta,\delta_{1:n})=q(\beta,\Sigma)q(\tau) q(\omega_{1:N+1})q(\delta_{1:n}).$$
$q(\delta_i=j)$ will be denoted by $q_{ij}$.   In this subsection we give the mean field updates for all factors except for $q(\delta_{1:n})$, which is considered
in the next subsection.  Technical details of the derivations are found in Appendix A.

For $\beta$, we recognize the form of $q(\beta,\Sigma)$ as being $q(\beta,\Sigma)=q(\Sigma)q(\beta|\Sigma)$ where $q(\Sigma)$ is inverse Wishart, and $q(\beta|\Sigma)=\prod_{j=1}^T q(\beta_j|\Sigma)$ with $q(\beta_j|\Sigma)=N_{N+1,m}(\hat{\beta}_j,V_j^{-1}\otimes \Sigma)$,
$$ \hat{\beta}_j=V_j^{-1}\left((E_q(\Omega^{-1})+C_j)M_j+E_q(\tau^{-1})\sum_{i=1}^n q_{ij}E_iy_i^T\right) $$
and
$$V_j=\left(E_q(\Omega^{-1})+C_j+E_q(\tau^{-1})\sum_{i=1}^n q_{ij}E_iE_i^T\right). $$
For $\Sigma$, $q(\Sigma) = IW (\hat{\nu}, \hat{S})$ where 
$$\hat{\nu} = \nu+n, \hat{S} = S+\sum_{j=1}^T \left(E_q(\tau^{-1})\sum_{i=1}^n q_{ij}y_iy_i^T+M_j^T
(E_q(\Omega^{-1})+C_j)M_j-\hat{\beta}_j^T V_j \hat{\beta}_j\right).$$
For $\tau$, $q(\tau) = IG(\hat{a}_{\tau}, \hat{b}_{\tau})$ where
$$ \hat{a}_\tau = a_\tau+\frac{nm}{2}, \hat{b}_\tau = b_\tau+\frac{1}{2}\sum_{i=1}^n\sum_{j=1}^T q_{ij}((y_i-\hat{\beta}_j^TE_i)^TE_q(\Sigma^{-1})(y_i-\hat{\beta}_j^TE_i)+mE_i^TV_j^{-1}E_i). $$
Lastly, for $\omega = (\omega_1,..,\omega_{N+1})$, $q(\omega)=\prod_{i=1}^{N+1}q(\omega_i)$. Each $q(\omega_i) = IG(\hat{a}_i, \hat{b}_i)$ where
\begin{align}\label{batchomega}
\hat{a}_i = a_i+mT/2, \hat{b}_i = b_i+\frac{1}{2}\sum_{j=1}^T \left((\hat{\beta}_{j,i}-M_{j,i}) E_q(\Sigma^{-1})(\hat{\beta}_{j,i}-M_{j,i})^T+m \omega_{ij}'\right). 
\end{align}
where $\hat{\beta}_{j,i}$ is the $i$th row of $\hat{\beta}_j$ and $\omega_{ij}'$ is the $i$th diagonal element of $V_j^{-1}$.

\subsection{Batch mean field update for local parameters}\label{Batch_deltaupdate}

We now factorize $q(\delta_{1:n})$ as $\prod_{i=1}^n q(\delta_i)$ and consider approximate mean field updates for $q(\delta_i)$, $i=1,\dots,n$.  
Using \eqref{VBmain}, for each $\delta_i$, we get
\begin{align*}
q (\delta_{i}) &\propto \exp\left(E_q \left\{ \log p(\delta_{i} | \delta_{\neq i}, \alpha) \right\} + E_q \left\{ \log (y_{i} | \theta, \delta_{i}) \right\}\right).
\end{align*}
where $\delta_{\neq i}$ denotes $\delta_{1:n}$ with $\delta_i$ omitted.  
Making the approximation 
\begin{align*}
 \exp\left(E_q\left\{\log p(\delta_i|\delta_{\neq i},\alpha)\right\}\right) & \approx E_q\left\{p(\delta_i|\delta_{\neq i},\alpha)\right\}
\end{align*}
we have 
\begin{align*}
q (\delta_{i}) &\propto E_q\left\{p(\delta_i|\delta_{\neq i},\alpha)\right\} \exp\left(E_q \left\{ \log (y_{i} | \theta, \delta_{i}) \right\}\right).
\end{align*}
If $i\geq T$, we approximate further
$E_q\left\{p(\delta_i=j|\delta_{\neq i},\alpha)\right\}$ by 
$$\frac{\sum_{k\neq i} q_{kj}+\alpha/T}{\alpha+n-1},$$
with the case where $i<T$ being handled by using the same expression but conditioning on $\delta_i\leq i$.  
This approximation is obtained by reordering so that the $i$th observation is last, taking an expectation in (\ref{Polya}) and then restoring the constraint associated with
the original ordering by conditioning on $\delta_i\leq i$ if $i\leq T$.  Note that because we order atoms according to their order of occurrence it must be the case that $\delta_i\leq i$ for
$i\leq T$.  To get an expression for our approximate mean field update it remains to evaluate 
$E_q \left\{ \log(y_i | \theta_{j} ) \right\}$ which is 
\begin{align*}
E_q \left\{ \log(y_i | \theta_{j} ) \right\} &= -\frac{m}{2} \log (2 \pi) - \frac{m}{2} E_q \left\{ \log \tau \right\} - \frac{1}{2} E_q \left\{ \log |\Sigma| \right\} \\
&\qquad -\frac{1}{2} E_q \left\{ \frac{1}{\tau} \right\} E_q \left\{ y_i^T\Sigma^{-1} y_i - 2 E_i^T \beta_j \Sigma^{-1} y_i + E_i^T \beta_j \Sigma^{-1} \beta_j^T E_i \right\} \\ 
&= -\frac{m}{2} \log (2 \pi) - \frac{m}{2} \left\{  \log (\btauhat) - \psi(\atauhat) \right\} \\
&\qquad - \frac{1}{2} \left\{ - \psi_m \left(\frac{\nuhat}{2} \right) - m \log (2) + \log |\Shat| \right\}  -\frac{1}{2} \frac{\atauhat}{\btauhat}  \left\{ \nuhat y_i^T (\Shat)^{-1} y_i  \right. \\
&\qquad \left. - 2 \nuhat  E_i^T \hat{\beta}_j (\Shat)^{-1} y_i + m E_i^T \Omegajhat E_i + (E_i^T \bhat_j)^T (\nuhat (\Shat)^{-1}) (E_i^T \hat{\beta}_j) \right\}. 
\end{align*}

\section{VSUGS for matrix-variate Dirichlet process mixture model}\label{matrixVSUGS}

The VSUGS algorithm, proposed by \citet{zhang+nyj14}, is an online learning procedure for fast fitting of Dirichlet process mixture models. It uses the variational approximation framework to improve the SUGS algorithm \citep{wang+d11}. The VSUGS algorithm is especially useful for large datasets as computing the full variational batch update or using MCMC 
might be computationally infeasible. The framework of the VSUGS procedure is as follows. Following \citet{zhang+nyj14}, we consider an approximation to the posterior $p(\delta_{1:i-1}, \theta_{1:T} |y_{i:i-1})$ of the form
\[
\prod_{j=1}^{i-1} q_{i-1} (\delta_j) \prod_{j=1}^{T} q_{i-1} (\theta_j).
\]
The algorithm starts at $\hat{q}_1 (\delta_1 = 1) = 1$, $\hat{q} (\theta_1) = p(\theta_1 | y_1, \delta_1 = 1)$. Then, at time $i$, we use $\hat{q}_{i-1} (\theta)$ and $\hat{q}_{i-1} (\delta_{1:i-1})$ as a prior for processing the data point $y_i$.  Then for a certain fixed choice of $\hat{q}_i(\delta_i)$ the mean field update for $\theta$ reduces to the following
approximation of $p(\theta|y)$:
\begin{align}\label{onlineupdate}
\hat{q}_i (\theta) \propto \hat{q}_{i-1}(\theta) \prod_{j=1}^T \exp\left( \hat{q}_i(\delta_i=j) E_q\left\{\log p(y_i|\theta_j)\right\}\right).
\end{align}
For the assignment variables $\delta_i$, we follow \citet{zhang+nyj14} and choose
\begin{align}
\hat{q}_i (\delta_i = j) = r_{ij} \int \hat{q}_{i-1} (\theta_{\delta_i}) p(y_i | \theta_{\delta_i}) d \theta_{\delta_i} \label{VSUGSdelta}
\end{align}
for $ j \in \{ 1,..., \min(i,T)\}$ where $T$ is the pre-specified truncation point for the number of mixture components and
\begin{align}
r_{ij}=\left\{\begin{array}{ll}
  \frac{ \sum_{k=1}^{i-1} \hat{q}_{i-1} (\delta_k = j) +\alpha/T}{\alpha+i-1} & \mbox{$j\in \{1,\dots, \min\left(i-1,T \right) \}$,} \\
 \frac{\alpha(1- \min\left(i-1,T \right) /T)}{\alpha+i-1} & \mbox{$j=\min\left(i-1,T \right) +1$.} \end{array}\right. \label{qij}
 \end{align}
One property of the VSUGS procedure is that \eqref{VSUGSdelta} splits the likelihood contribution from the $i$th observation among the mixture components. This deviates from the original SUGS algorithm \citep{wang+d11} which uses a ``hard" allocation to mixture components.  In the case of conjugate priors, the VSUGS algorithm retains the computational advantages
of the original SUGS algorithm.  See \citet{zhang+nyj14} for further details.

\subsection{Sequential update of variational parameters for $\tau, \Sigma$ and $\Omega$}

In the batch update of the global parameters the expectations of $\tau^{-1}$,  $\Sigma^{-1}$ and $\Omega^{-1}$ with respect to $q (\tau)$, $q (\Sigma)$ and $q (\Omega)$ respectively are required. For an online algorithm like VSUGS, these expectations change when a new data point enters. In order to use \eqref{onlineupdate}, 
it is required to replace the expectation of $\tau^{-1}, \Sigma^{-1}$ and $\Omega^{-1}$ with $E_{q_i} (\tau^{-1}), E_{q_i} (\Sigma^{-1})$ and $E_{q_i} (\Omega^{-1})$ respectively, where $E_{q_i}$ represents the variational expectation at time $i$. Following the derivation of the batch updates, our corresponding online learning update for the variational parameters is $q_i (\beta_j) \sim N_{q+1,m} (\bhatji, (V_j^{(i)})^{-1} \otimes \Sigma)$,  $q_i (\Sigma) \sim IW\left(\nui ,S^{(i)} \right)$. $q_i (\tau) \sim IG(\ataui, \btaui)$, where

\begin{align*}
\hat{\beta}^{(i)}_j &= (V_j^{(i)})^{-1}\left(   (V_j^{(i-1)}) \bhatjione + \frac{\atauione}{\btauione} \hat{q}_i (\delta_i = j) E_iy_i^T\right), \\
V_j^{(i)} &= V_j^{(i-1)} + \frac{\atauione}{\btauione}\hat{q}_i (\delta_i = j) E_iE_i^T , 
\end{align*}
\begin{align*}
 \nui &= \nu + i.  \\
\Si &=  \Sione +    \sum_{j=1}^T \left\{ \frac{\atauione}{\btauione} \hat{q}_i (\delta_i = j) y_iy_i^T + (\bhatjione)^T  V^{(i-1)}_j \bhatjione -(\bhatji)^T V^{(i)}_j \bhatji \right\}, \\
\ataui &=  a_{\tau} +\frac{im}{2}, \\   
\btaui  &= \btauione  +  \frac{1}{2} \sum_{j=1}^{T} \hat{q}_i (\delta_i = j) \left\{ (y_i- (\bhatji)^TE_i)^T  \nui (\Si)^{-1}  (y_i  - (\bhatji)^TE_i) \right. \\
&\qquad\qquad\qquad\qquad\qquad\qquad\qquad \left.+mE_i^T(V^{(i)}_j)^{-1}E_i) \right\}. 
\end{align*}
As the second term on the RHS of \eqref{onlineupdate} does not include any terms for $\omega_1,...,\omega_{N+1}$, there is no online learning required for these parameters. At every step of the online VSUGS algorithm, we continue using the batch update for $q_i (\omega_{1:N+1})$. For the full algorithm, we refer to Algorithm \ref{MVSUGS}.

\subsection{Sequential VSUGS type update for the $\delta_i$}

Suppose we assimilate observations sequentially and at step $i-1$ we have a variational posterior distribution of the form
$$q_{i-1}(\beta,\Sigma)q_{i-1}(\tau)q_{i-1}(\omega)q_{i-1}(\delta)$$
where $q_{i-1}(\beta,\Sigma)=q_{i-1}(\Sigma)\prod_{j=1}^T q(\beta_j|\Sigma)$ with $q_{i-1}(\Sigma)$ being $IW(\nuione,\Sione)$, 
$q_{i-1}(\beta_j|\Sigma)$ being $N_{N+1,m}(\bhatjione,\Omegajione \otimes \Sigma)$, 
$q_{i-1} (\tau)=IG(\atauione,\btauione)$ and $q_{i-1}(\omega)=\prod_{j=1}^{N+1}q_{i-1}(\omega_j)$ with $q_{i-1} (\omega_j)$ being $IG(\ajione,\bjione)$.  
Also $q_{i-1}(\delta)=\prod_{j=1}^{i-1}q(\delta_j)$. Using the VSUGS approximation, we take
\begin{align}\label{qij}
\hat{q}_i (\delta_i = j) = r_{ij} \int p(y_i|\beta_{\delta_i}^T E_i,\tau\Sigma)q_{i-1}(\tau)q_{i-1}(\beta_{\delta_i}|\Sigma)q_{i-1}(\Sigma)d\beta_{\delta_i} d\Sigma d\tau
\end{align}
where
\begin{align}
r_{ij}=\left\{\begin{array}{ll}
  \frac{ \sum_{k=1}^{i-1} \hat{q}_{i-1} (\delta_k = j) +\alpha/T}{\alpha+i-1} & \mbox{$j\in \{1,\dots, \min\left(i-1,T \right) \}$} \\
 \frac{\alpha(1- \min\left(i-1,T \right) /T)}{\alpha+i-1} & \mbox{$j=\min\left(i-1,T \right) +1$} \end{array}\right. . \label{qij}
 \end{align}
%
The integral in \eqref{qij} can be evaluated as (see Appendix B)
\begin{align}
& \int p(y_i|\beta_j^TE_i,\tau\Sigma)q_{i-1}(\tau)q_{i-1}(\beta_j|\Sigma)q_{i-1}(\Sigma)d\beta_jd\Sigma d\tau   \nonumber \\
& = \int (2\pi \tau)^{-\frac{m}{2}}q_{i-1}(\tau)|\Omegajione|^{-\frac{m}{2}}|\frac{1}{\tau}E_iE_i^T+(\Omegajione)^{-1}|^{-\frac{m}{2}} \label{inttau} \\
& \qquad \frac{|\Sione|^{\nuione/2}\Gamma_m((\nuione+1)/2)}{2^{\nuione \frac{m}{2}}\Gamma_m(\nuione/2)} 2^{(\nuione+1)\frac{m}{2}}  \nonumber  \\
&  \qquad \left |\Sione +\frac{1}{\tau}y_iy_i^T+ (\bhatjione)^T (\Omegajione)^{-1}\bhatjione -\bar{\beta}_j^T(\frac{1}{\tau}E_iE_i^T+ (\Omegajione)^{-1})\bar{\beta}_j \right|^{-\frac{\nuione+1}{2}}d\tau \nonumber
\end{align}
where $\bar{\beta}_j= \left(\frac{1}{\tau}E_iE_i^T+ (\Omegajione)^{-1}\right)^{-1} \left(\frac{1}{\tau}E_iy_i^T+  (\Omegajione)^{-1} \bhatjione \right)$. This last integral does not seem to be easily computable analytically.  It is an expectation with respect to $q_{i-1}(\tau)$, and if this distribution is concentrated around the mean 
it is reasonable to make the approximation $\int f(\tau)q_{i-1}(\tau)d\tau=f(E_{q,i-1}(\tau))$ for functions $f(\tau)$ and where we have written $E_{q,i-1}(\tau)=\int \tau q_{i-1}(\tau)d\tau$. Using this approximation here we get that the integral is approximately
\begin{align}
&  \left(\frac{2\pi}{\mutauione} \right)^{-m/2}|\Omegajione|^{-m/2}| \mutauione E_iE_i^T+ (\Omegajione)^{-1}|^{-m/2}  \label{qdelta} \\
& \frac{|\Sione|^{\nuione /2} \Gamma_m\left(\frac{\nuione+1}{2}\right)}{2^{\nuione m/2}\Gamma_m(\nuione /2)} 2^{\frac{(\nuione+1)m}{2}} \nonumber \\ 
& \left|\Sione+ \mutaui y_iy_i^T+ (\bhatjione)^T (\Omegajione)^{-1}\bhatjione -\tilde{\beta}_j^T(\mutaui E_iE_i^T+(\Omegajione)^{-1})\tilde{\beta}_j \right|^{-\frac{\nuione +1}{2}} \nonumber
\end{align}
where $\tilde{\beta}_j= \left( \mutauione E_iE_i^T+ (\Omegajione)^{-1}\right)^{-1} \left( \mutauione E_iy_i^T+  (\Omegajione)^{-1} \bhatjione \right)$ and $\mutauione$ is the expectation of $\tau^{-1}$ with respect to $q_{i-1} (\tau)$.

\begin{algorithm}
\floatname{algorithm}{Algorithm}
 \caption{\bf : VSUGS algorithm for Matrix DPMM}\label{MVSUGS}
\begin{algorithmic}
\State Initialize $\alpha, T, a_i,  b_i , a_{\tau}, b_{\tau}, \nu, S, C_1,..,C_T, M_1,..,M_T$
\State $\hat{q}_1 (\delta_1 = 1) \gets 1, \hat{q}_1 (\delta_1 = 2), \dots, \hat{q}_1 (\delta_1 = T) \gets 0$. 
\State $a^{(0)}_{\tau} \gets a_{\tau} $, $b^{(0)}_{\tau} \gets b_{\tau}$, $a^{(0)}_i \gets a_i$, $b^{(0)}_i \gets b_i $, $S^{(0)} \gets S$, $\nu^{(0)} \gets \nu$.
\State $\mu^{(0)}_{\tau^{-1}} \gets \frac{a_{\tau}^{(0)} }{b_{\tau}^{(0)}}$, $\mu^{(0)}_{\Sigma^{-1}}  \gets \nu^{(0)} (S^{(0)})^{-1}$, $\mu_{\Omega^{-1}}^{(0)} \gets \mbox{diag}(a^{(0)}_i / b^{(0)}_i)$.
\State $V^{(0)}_j \gets\mu_{\Omega^{-1}}^{(0)}+ C_j$, $\hat{\beta}_j^{(0)} \gets M_j$.
\For{ $i = 1:n$}
\If{$i \geq 2$}
\State $T_i \gets \min(T, i-1)$.
\For{ $j = 1:T_i$}
\If{$j < T_i$}
\State $r_{ij} \gets   \frac{ \sum_{k=1}^{i-1} \hat{q}_{i-1} (\delta_k = j) +\alpha/T}{\alpha+i-1}  $.
\Else
\State $r_{ij} \gets  \frac{\alpha(1- \min\left(i-1,T \right) /T)}{\alpha+i-1} $.
\EndIf
\State $\Lambda \gets \mutauione E_iE_i^T+(\Omegajione)^{-1}$.
\State  $\tilde{\beta}_j \gets \Lambda \left(\mutauione E_iy_i^T+  (\Omegajione)^{-1} \bhatjione \right) $.
\State $\hat{q}_i (\delta_i = j) \gets r_{ij} (\pi (\mutauione)^{-1})^{-m/2} |\Omegajione|^{-m/2} |\Lambda|^{-m/2} \frac{|\Sione|^{\nuione/2}}{\Gamma_m(\nuione/2)} 
\Gamma_m\left(\frac{\nuione+1}{2}\right)$
\State \qquad\qquad $\left|\Sione + \mutauione y_iy_i^T+ (\bhatjione)^T(\Omegajione)^{-1}\bhatjione -\tilde{\beta}_j^T \Lambda \tilde{\beta}_j \right|^{-\frac{\nuione+1}{2}}. $
\EndFor
\EndIf
\For{ $j = 1:T$}
\State $V^{(i)}_j \gets V^{(i-1)}_j + \mutauione \hat{q}_i (\delta_i = j) E_iE_i^T$.
\State $\bhatji \gets (V^{(i)}_j)^{-1} \left(  V^{(i-1)}_j \bhatjione + \mutauione \hat{q}_i (\delta_i = j) E_i y_i^T  \right)$.
\EndFor
\State $\nui \gets \nuione + 1$.
\State $\Si \gets \Sione +    \sum_{j=1}^{T} \left\{  \mutauione \hat{q}_i (\delta_i = j) y_iy_i^T + (\bhatjione)^T  V^{(i-1)}_j \bhatjione -(\bhatji)^T V^{(i)}_j \bhatji \right\} $.
\State $\ataui \gets \atauione + \frac{m}{2}$
\State $\btaui \gets \btauione $
\State \qquad  $ + \frac{1}{2} \sum_{j=1}^{T} \hat{q}_i (\delta_i = j) \left\{ (y_i- (\bhatji)^TE_i)^T \musigmai  (y_i- (\bhatji)^TE_i)+mE_i^T(V^{(i)}_j)^{-1}E_i) \right\}.$
\State $\mutaui \gets \frac{\ataui}{\btaui}$, $\musigmai \gets \nui (\Si)^{-1}$.
\For{ $k = 1:N+1$}
\State $a_{k}^{(i)} \gets a_{k}^{(0)} + \frac{m T}{2}$
\State $b_k^{(i)} \gets  b_{k}^{(0)} +\frac{1}{2}\sum_{j=1}^{T} \left((\hat{\beta}_{j,k}^{(i)}-M_{j,k}^{(i)})^T  \musigmai  (\hat{\beta}_{j,k}^{(i)}-M_{j,k}^{(i)})+m \omega_{ij}'\right)$, 
\State \qquad where $\omega_{ij}$ is the $i$th diagonal element of $V^{(i)}_j$  and $\hat{\beta}_{j,k}$ is the $k$th row of $\hat{\beta}_j$. 
\EndFor
\EndFor
\end{algorithmic}
\end{algorithm}

\section{Posterior predictive inference}\label{Post_pred}

Suppose we are given a new input vector $x_0$ and wish to predict the response vector $y_0$. 
Write $E_0=(1,E_1(x_0),\dots,E_N(x_0))^T$.  
The posterior predictive distribution of $y_0$ can be evaluated as
\begin{align*}
p(y_0 | x_0, y_{1:n}) = \sum_{j=1}^T p(\delta_0 = j | y_{1:n}) \int p(y_0 | y_{1:n}, \tau, \Sigma, \beta_{1:T}, \delta_0 = j)p( \tau, \Sigma, \beta_{1:T}| y_{1:n}) d \beta_{1:T} d\tau d\Sigma.
\end{align*}
Assuming $n>T$, we replace $p(\delta_0 = j | y_{1:n})$ with $r_{n+1\,j}$ as defined in \eqref{qij}. Also, replacing $p(\tau, \beta_j, \Sigma | y_{1:n})$ with the 
corresponding variational posterior $q_n(\tau, \beta_j, \Sigma)$, the predictive density becomes
\begin{align}
p( y_0 | E_0, y_{1:n}) = \sum_{j=1}^T r_{n+1\,j} \int p( y_0 | y_{1:n}, \tau, \Sigma, \beta_j, \delta_0 = j) q_n(\tau ) q_n(\beta_j) q_n(\Sigma)    d\beta_{j} d\tau d\Sigma. \label{predictive}
\end{align}
The integral in \eqref{predictive} evaluates to a multivariate $t$-distribution.  So an approximate posterior predictive density is obtained as a mixture
of multivariate $t$-densities.  For more details, including the parameters of the multivariate-$t$ mixture components, see Appendix C.

\subsection{Regression-type adjustment for improving predictive inference}\label{RegVSUGS}

One advantage of using the matrix-variate Dirichlet process approach to flexible regression is that avoiding covariate dependence in the mixing weights
greatly simplifies computation, something that we have exploited here for implementing an online algorithm.  However, this does place a greater burden on the
mean functions in the regression mixture components to model the response distribution in a flexible way.  Here we consider a method for improving predictive
performance of the fitted model, borrowing an idea from the literature on regression adjustment methods for approximate Bayesian computation 
\citep{beaumont+zb02,blum10,blum+f10,blum+t10}.  The idea below is given in equation (4.1) of \citet{blum+t10}.

Suppose we wish to consider prediction of a new response $y^*=(y^*_1,\dots,y^*_m)^T$ to be observed with corresponding covariate $x^*=(x^*_1,\dots,x^*_p)^T$.  
Write $N_k(x^*)$ for the $k$ nearest neighbours of $x^*$ among the observed covariates $\{x_1,\dots,x_n\}$.  We write the corresponding values of $(x,y)$ as
$(x_{i_1},y_{i_1}),\dots,(x_{i_k},y_{i_k})$ so that $i_1,\dots,i_k$ denote the indices of the covariates in $N_k(x^*)$.  In our fitted regression model, write
$\hat{F}_j(.|x)$ for the marginal distribution function of the $j$th component of the response in the fitted model at $x$, and $\hat{F}_j^{-1}(\cdot|x)$ for its inverse
where it is assumed this exists.  

If the fitted model is correct, $\hat{F}_j(y_{i_r j}|x_{ir})$ is uniform on $[0,1]$, and 
 $\hat{F}_j^{-1}(\hat{F}_j(y_{i_r j}|x_{i_r})|x^*)$ has the distribution
$\hat{F}_j(\cdot|x^*)$.  So if we set 
\begin{align}\label{RegadjVSUGS}
 y_r^a & = (y_{r1}^a,\dots,y_{rm}^a)^T \\
 & = (\hat{F}_1^{-1}(\hat{F}_1(y_{i_r 1}|x_{i_r})|x^*),\dots,\hat{F}_m^{-1}(\hat{F}_m(y_{i_r m}|x_{i_r})|x^*))^T \nonumber
\end{align}
$r=1,\dots,k$, then marginally $y_{1j}^a,\dots,y_{kj}^a$ is a sample from $\hat{F}_j(\cdot|x^*)$ (if the regression model is correct).  

The sample $y_r^a$, $r=1,\dots,k$ can be used to do approximate predictive inference.  The advantage of this method compared to using $\hat{F}_j(\cdot|x^*)$ 
directly is that by using, in effect, quantile residuals locally around $x^*$ to define the particles $y^a_r$ we are able to adjust for any local misfit of the regression
model.  This can result in improved predictive inference.  Note that by transforming the particles $y_{i_r}$ component-wise we are not guaranteed to preserve the correct
multivariate dependence structure in the fitted model at $x^*$, but if the copula of the fitted distribution changes only slowly with $x$ over the neighbourhood used 
the effects of this approximation are minor.

\section{Application to weak informative prior selection}

As an application of our proposed methodology, we consider flexible approximation of prior predictive densities as a function of a prior hyperparameter value based
on data simulated under a model, when these prior predictive densities are not analytically tractable.  Approximating such predictive densities is useful for prior choice.
In the application considered here we make use of the way that the MDP mixture model is able to approximate the whole response distribution flexibly.  In the next
section we will look more closely at the quality of point predictions of the online algorithm compared to those obtained by batch VB and MCMC approaches.  

Consider a statistical model $p(y|\xi)$ for data $y$ with parameter $\xi$.  Suppose we have a class of priors $p(\xi|\lambda)$ where $\lambda$ is a hyperparameter
value to be chosen.  We also suppose that there is a value $\lambda_0$ for $\lambda$ that has already been chosen tentatively as representing our best current
prior knowledge of $\xi$.  For the purpose of sensitivity analysis, we may wish to define a prior that is less informative than $p(\xi|\lambda_0)$, 
and this might be particularly useful in the case where the information brought by the prior and likelihood seem to be contradictory.  
\citet{Evans2011} considered defining the amount of information in a prior $p(\xi|\lambda)$ relative to $p(\xi|\lambda_0)$ through the idea
of prior-data conflict.  The notion of weakly informative priors formalized in \citet{Evans2011} was inspired by previous work of \citet{Gelman2006}.

Since the idea of \citet{Evans2011} is built on the idea of checking for prior-data conflict, this needs to be understood first.  
Prior-data conflict occurs where the prior puts all its mass out in the tails of the likelihood.  A way of testing for prior-data conflict which
modifies a suggestion of \citet{Box1980} will be considered here, following \citet{Evans2006}.  Their idea is that a minimal sufficient statistic value $S$ determines the likelihood, 
so we can check if the observed likelihood is in conflict with the prior by seeing whether the observed value of the sufficient statistic $S_{obs}$ say lies out in the tails
of its prior predictive distribution.  A $p$-value for checking for conflict with the prior $p(\xi|\lambda)$ can be computed as
\begin{align}
p(S_{obs},\lambda) & =P(p(S|\lambda)\leq p(S_{obs}|\lambda)),  \label{conflictp}
\end{align}
where $S\sim p(S|\lambda)$ and $p(S|\lambda)=\int p(S|\xi)p(\xi|\lambda) d\xi$ is the prior predictive distribution of $S$.  
If a non-trivial sufficient statistic does not exist it may be reasonable to choose an asymptotically sufficient statistic such as the maximum likelihood estimator
or some approximation to it.  Note that $p(S_{obs},\lambda)$ is calculating the probability that a random draw from $p(S|\lambda)$ has lower density than the value 
of $S_{obs}$ and it is small if $S_{obs}$ lies out in the tails of $p(S|\lambda)$.  The above prior-data conflict check can be 
modified in various ways - for more details see \citet{Evans2006}.  

To use this notion of prior-data conflict checking to define how informative the prior $p(\xi|\lambda)$ is relative to $p(\xi|\lambda_0)$ \citet{Evans2011} consider
$S$ generated randomly under $p(S|\lambda_0)$ and ask whether for data generated in such a way does doing the analysis under $p(\xi|\lambda)$ rather than $p(\xi|\lambda_0)$ result
in a reduction of the frequency of prior-data conflicts.  The occurrence of a conflict is defined by choice of a certain cutoff for a conflict $p$-value such as (\ref{conflictp}).  
It is possible to consider various modifications of the basic idea considering uniformity of reduction of levels of conflict over different $p$-value cutoffs, see
\citet{Evans2011} for more details.  

Following the ideas of \citet{Evans2006} and \citet{Evans2011}, \citet{Nott2015} propose modifying a regression adjustment approach used in the approximate Bayesian computation (ABC) literature to approximate prior predictive distributions $p(S|\lambda)$ for many different $\lambda$ in a computationally thrifty way when $S$ may be expensive to compute.  In particular, they consider the method of \citet{blum+f10}, which modifies a suggestion of \citet{beaumont+zb02}, to generate approximate samples from prior predictive densities $p(S|\lambda)$ and then use these samples for the required computations. 
This approach is much more computationally efficient than generating a large number of replications of $S$ at each value of $\lambda$ independently for every value $\lambda$ of interest
on a grid, say.  The method starts by 
generating values $\lambda_i$, $i = 1,...,n$ from a pseudo-prior $p(\lambda)$.  
Then values $(\xi_i, S_i)$, $i = 1,.., n$ are generated for$(\xi,S)$ from $ p(\xi|\lambda) p(S | \xi)$ where $S$ is a minimal sufficient statistic or some asymptotically sufficient statistic.  
\citet{Nott2015} modify the ABC with regression adjustment method in \citet{blum+f10} by reversing the usual role of the parameters and the summary statistics 
where these methods are used in the ABC context.  
They fit a regression model with
\begin{align}\label{ABCmodel}
S_i = \mu(\lambda_i) + \sigma (\lambda_i) \epsilon_i
\end{align}
where the $\epsilon_i$ are i.i.d errors with zero mean and variance one and $\mu (\lambda)$ and $\sigma (\lambda)$ are flexible mean and standard deviation functions.  
\citet{blum+f10} parametrize $\mu(\cdot)$ and $\sigma(\cdot)$ using neural networks.  
After fitting the model to the data to obtain estimates $\hat{\mu} (\lambda)$ and $\hat{\sigma} (\lambda)$, a sample of $p(S|\lambda)$ can be obtained approximately
by considering the fitted mean for the regression model plus the empirical residuals.  The empirical residual for the $i$th point is 
$\hat{\epsilon}_i=\hat{\sigma}  (\lambda_i)^{-1} (S_i - \hat{\mu} (\lambda_i) )$, and using such empirical residuals together with the fitted model at $\lambda$ gives
\begin{align}\label{ABCfit}
S_i^a (\lambda) = \hat{\mu} (\lambda) + \hat{\sigma} (\lambda)  \left\{\hat{\sigma}  (\lambda_i)^{-1} (S_i - \hat{\mu} (\lambda_i) )\right\}
\end{align}
as an approximate sample from $p(S|\lambda)$ if the regression model is correct.  
Based on the approximate sample $S_i^a(\lambda)$, they use a kernel estimate to approximate $p(S|\lambda)$. Let this kernel estimate be $\hat{p} (S|\lambda)$. 
Next, suppose that $S_j^0$, $j=1,\dots,n$ are draws from $p(S|\lambda_0)$.  Then a particle approximation to the distribution of $p(S,\lambda)$ for $S\sim p(S|\lambda_0)$ is
given by the values $\hat{P} (S_1^0,\lambda),..., \hat{P} (S_n^0,\lambda)$ where
\[
\hat{P} (S_j^0,\lambda) = \frac{1}{n} \sum_{i=1}^n I (\hat{p} (S_i^a (\lambda) | \lambda) \leq \hat{p} (S_j^0 | \lambda))
\]
The distribution of the $p$-value can be used to determine whether $p(\xi|\lambda)$ is weakly informative relative to $p(\xi|\lambda_0)$ or not.   

We propose using our approach to assess weak informativity of alternative priors compared to a base prior, similar to the above.  
However, instead of using 
the ABC with regression adjustment \eqref{ABCmodel}, we propose fitting a matrix-variate Dirichlet process mixture model with
$S$ as response and $\lambda$ as predictors. 
In applying the MDP prior approach we also employ the regression adjustment method of Section \ref{RegVSUGS} to obtain approximate samples from $p(S|\lambda)$ 
at any desired value of $\lambda$.  Kernel estimates of $p(S|\lambda)$ are then constructed as for the approach of \citet{Nott2015} and the procedure above followed
for approximating the distribution of conflict $p$-values for $S$ generated from $p(S|\lambda_0)$.  As observed in \citet{Nott2015} high accuracy is not needed in the 
regression calculations;  
the regression calculations are simply a screening computation, and 
once a candidate value of $\lambda$ is chosen for a weakly informative prior then for the single finally chosen value we can
generate a large sample from the prior predictive distribution and see whether our approximate calculations were good enough.

\subsection{Analysis of a logistic regression example}\label{bioassay}

\begin{figure}[h]
  \centering
\begin{subfigure}{.45\textwidth}
      \includegraphics[width=1\textwidth]{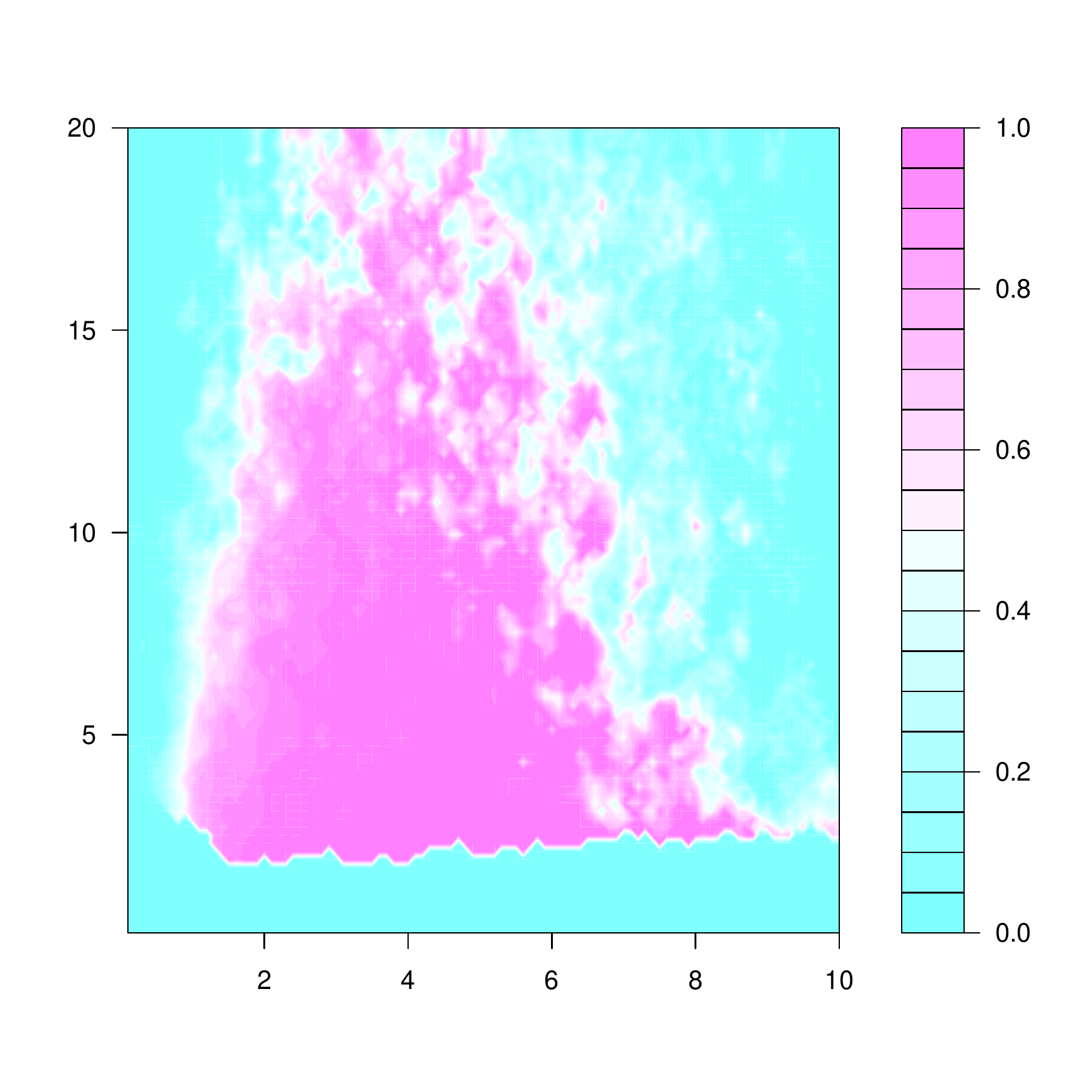}
  \caption{}
\end{subfigure}
\begin{subfigure}{.45\textwidth}
      \includegraphics[width=1\textwidth]{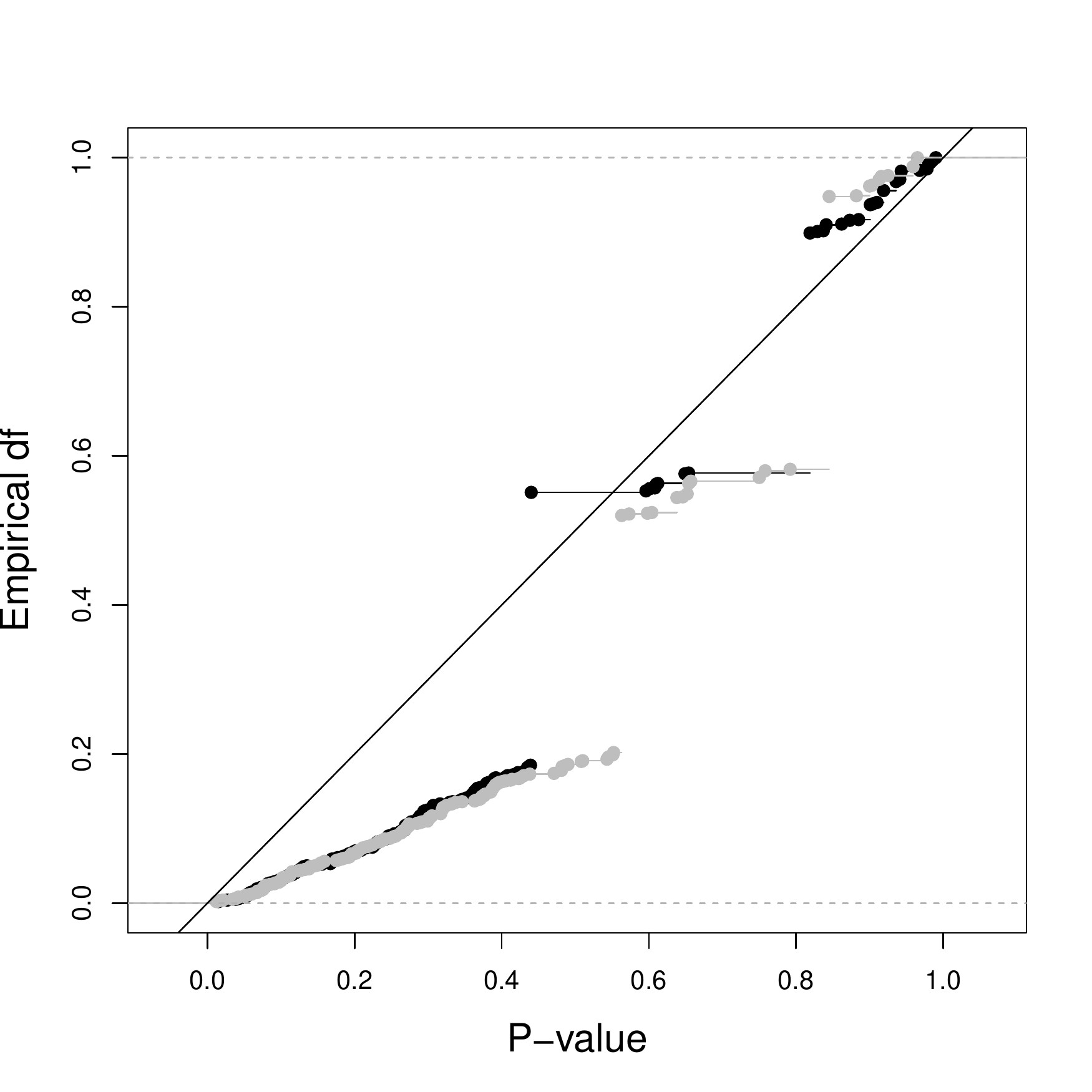}
  \caption{}
\end{subfigure}
  \caption{\label{Bioassay} (a) Estimated degree of weak informativity at $\gamma = 0.05$ for the bioassay example. (b) The plot points in grey and black represent estimated distribution of the conflict p-value using VSUGS-adjusted approach and by direct simulation from the prior predictive distribution.  Both estimation uses the alternative prior $\sigma_0 = \sigma_1 = 4$.    }
\end{figure}

We consider a bioassay example from \citet{Racine1986} which is also analysed in \citet{Gelman2008}, \citet{Evans2011} and \citet{Nott2015}. In this dataset, four groups of five animals were exposed to different level of doses ($x_i$) and the number of death ($y_i$) were recorded. Following \citet{Nott2015} and \citet{Evans2011}, we consider a logistic regression setup. It is assumed that the covariate has been transformed to log scale and centred and scaled as in \citet{Gelman2008}.  The model is $y_i \sim \mbox{Bin}(5, p_i)$ where $\textrm{logit} (p_i) = c_0 + c_1 x_i$. We assume that the priors for $c_0$ and $c_1$ are independent and follow Gaussian distributions with zero mean and variances $\sigma_0^2$ and $\sigma_1^2$ respectively. For our base prior, we consider $\sigma_0 = 10$ and $\sigma_1 = 2.5$. 

\citet{Evans2011} consider the exact sufficient statistics $(y_1,y_2,y_3,y_4)$ for analysis. \citet{Nott2015} consider using the posterior mode $(\hat{c}_0, \hat{c}_1)$ 
for a prior with $\sigma_0=\sigma_1=10$ as an approximation to the MLE but which unlike the MLE will exist even in degenerate cases.  
They consider the MLE for the dimension reduction that it brings
and as a generic choice applicable in situations where a non-trivial minimal sufficient statistic doesn't exist.  For the statistic $S$ used to define the conflict check in the definition
of weak informativity, they use a transformation of $(\hat{c}_0, \hat{c}_1)$ to the fitted probabilities $\hat{p}_2$ and $\hat{p}_3$ at $x_2$ and $x_3$ respectively. 
The reasons for this are discussed further in \citet{Nott2015}.
That is, our approximate sufficient statistic is $(\hat{p}_2,\hat{p}_3)$ where $\hat{p}_i = 1/(1+ \exp(-\hat{c}_0 + \hat{c}_1 x_i))$ for $ i = 2, 3$. 
Note that, because of the discreteness of the data, strictly the distribution of this statistic is also discrete but continuity may be used as a reasonable approximation when
the number of different possible values is large and we do this here.  Note also that in the kernel density estimation we ignore any boundary effects due to the bounded
support of the statistics.  

To use our methodology to investigate weak informativity with respect to the base prior in this example we proceed as follows.  First, we generate 400,000 values of $(\sigma_0,\sigma_1)$ from a pseudo prior which is uniform distribution on $[0.1,10] \times [0.1,20]$. We label these values as $\sigma^{(i)} = (\sigma^{(i)}_0,\sigma^{(i)}_1)$ for $i = 1,...,400,000$. For each $\sigma^{(i)}$, we generate $c^{(i)}_0$ and $c^{(i)}_1$ from their respectively prior distribution and then compute the probability $(p^{(i)}_1,p^{(i)}_2,p^{(i)}_3,p^{(i)}_4)$. We use these probabilities to generate $(y^{(i)}_1,y^{(i)}_2,y^{(i)}_3,y^{(i)}_4)$ from their respective binomial distribution. 

Let $\hat{p}^{(i)} = (\hat{p}^{(i)}_2, \hat{p}^{(i)}_3)^T$ and $K(\sigma^{(i)},\sigma^{(j)})$ and $K(\sigma^{(i)},\sigma^{(j)}) = \exp \{ - || \sigma^{(i)} - \sigma^{(j)}|| / 2\kappa^2 \}$, where $||.||$ is the Euclidean norm and $\kappa^2$ is the mean of the euclidean distance among 5000 random samples drawn from $\sigma^{(1)},...,\sigma^{(400,000)}$. \citet{Zhang2010} propose a similar choice of the kernel hyperparameter $\kappa^2$ and it is verified to be effective in their experimental analysis. We fit the matrix-variate Dirchlet process mixture model with $\hat{p}^{(i)}$ as our response vector and the basis functions $(1,K(\sigma^{(i)},\sigma^{(1)}),..,K(\sigma^{(i)},\sigma^{(N)}))$ as our covariates. We set $\alpha = 100$ and $T = 4$. 
For our prior, we set $a_{\tau} = 5$, $b_{\tau} = 0.5$, $a_i = 5$, $b_i = 0.5$, $S = I_2 + \frac{1}{2} \mathbf{1}_2 \mathbf{1}_2^T$, $\nu = 3$ and $M_1,...,M_T, C_1,..,C_T$ as zero matrices. 

We first run Algorithm \ref{MVSUGS} to fit the model. For each $\lambda_{\sigma}$ on a $100 \times 100$ regular grid on $[0.1,10] \times [0.1,20]$, we take 1000 nearest neighbours from the set $\{\sigma^{(i)}\}_{i = 1,...,400,000}$ using the \texttt{knnsearch} function in \texttt{matlab}. Then, we use these 1000 nearest neighbour to estimate the corresponding $(\hat{p}_2, \hat{p}_3)^T$ using the regression adjusted approach proposed in Section \ref{RegVSUGS}.  
Note that if we were to generate $1000$ samples directly for the prior predictive for each of our $10,000$ grid points directly this would increase the number of prior predictive
simulations and the computational effort by an order of magnitude.
The collection of $\lambda_{\sigma}$ covers the support of the hyperprior and each $\lambda_{\sigma}$ is considered as an alternative prior for comparison with the base prior.  We follow the suggestion from \citet{Evans2011} to measure the degree of weak informativity of an alternative prior. That is, we let $p_{\gamma}$ be the $\gamma \%$ quantile of the conflict $p$-value distribution for the base prior. Under the alternative prior, let $q_{\gamma}$ be the probability of a conflict p-value which is less than or equal to $p_{\gamma}$. The degree of weak informativity $\zeta_{\gamma}$ is defined as 
\begin{align}
\zeta_{\gamma} =\left\{\begin{array}{ll}
  0 & q_{\gamma} > p_{\gamma} \\
 1 - q_{\gamma}/p_{\gamma} & q_{\gamma} \leq p_{\gamma} 
\end{array}\right. \label{weakinform}
 \end{align}
Figure \ref{Bioassay}(a) plots the degree of weak informativity $\zeta_{\gamma}$ for all $\lambda_{\sigma}$ when $\gamma = 0.05$. We observe that the plot is very similar to Figure 2 in \citet{Nott2015}. From the plot, it seems that $\sigma_0 = \sigma_1 = 4$ is a suitable choice for a weakly informative prior. This conclusion agrees with the parameter choice of  \citet{Nott2015}.  As mentioned earlier, the approximate regression calculations are simply screening calculations where high accuracy is not needed since the quality of
the final answer can be checked.  Figure \ref{Bioassay}(b) shows a comparison of the estimated distribution of the conflict $p$-values based on regression (grey) compared to one based on direct simulation from the prior predictive at the finally chosen $\lambda$ (black).  In the lower tail, which is what matters for declaring the existence of any conflict and defining
weak informativity, the two distributions
agree very well, and the approximate regression calculations have successfully allowed us to identify a suitable weakly informative prior.
Note that standard procedures such as Gibbs sampler proposed in \citet{Zhang2010} are not suitable for use in this application with the matrix-variate DP prior model 
as we need to generate a large number of data points from
the hyperprior to obtain good estimates of the prior predictive distributions $p(S|\lambda)$ and so a method is needed that is able to handle large datasets.

\section{Empirical comparisons of predictive performance}

In this section, we focus on the predictive performance of the various approaches when fitting a model.  We consider the Gibbs sampler \citep{Zhang2010} and three different versions of the variational procedure. The first variational approach, which we call VB (Stick breaking), is the method of \citet{blei+j06}.   The second variational method, which we call VB (P\'{o}lya urn) is the batch variational method discussed in Section \ref{VBbatch}. In the third variational approach, we consider the VSUGS approach discussed in Section \ref{matrixVSUGS}. 

 In our experiment, the predictors are standardized to have zero mean and unit variance with respect to the training set. We fixed the number of iterations for the Gibbs sampler to be 25000, of which the first 15000 will be discarded as burn in. For the remaining 10000 iterations of the Gibbs sampler, we retain every 10th realization of the parameters. We also fixed the number of iterations for the variational approach using batch updates to 100. For the online variational approach, we first initialize the variational parameters of the assignment variables on a relatively small number of data points using the batch update and then run Algorithm \ref{MVSUGS}. 

We measure the performance of the various approaches by considering their root mean square error (RMSE) and mean absolute percentage error (MAPE). Let $\tilde{y}_1,..,\tilde{y}_m$ be our target response values and $\hat{y}_1,..,\hat{y}_m$ be their respective fitted value.  The error indicators are defined as
\[
RMSE = \sqrt{\frac{1}{m} \sum_{i=1}^m (\tilde{y}_i - \hat{y}_i  )^2}
\]
and
\[
MAPE = \frac{1}{m} \sum_{i=1}^m \left| \frac{\tilde{y}_i - \hat{y}_i}{\tilde{y}_i} \right|.
\]
Similar performance measures are used in \citet{Zhang2010}.

In our analysis, we compute the RMSE and MAPE both in-sample (for the training set) and out-of-sample (for the test set). 
Although we are mostly interested in out-of-sample predictive performance, looking at in-sample measures of fit can also be useful here where we are comparing several
computational approximations for the same posterior;  measures of in-sample fit can be revealing about differences in the quality of posterior approximation even
if out-of-sample predictive performance is similar for the different methods.  
For the RMSE and MAPE of in-sample predictions, they are constructed as follows. For the Gibbs sampler, at each retained MCMC realization $r$, $r=1,\dots ,1000$, we have a cluster allocation for each $y_1,...,y_n$. Suppose let's say that for a particular data point $y_i$, the allocation at the $r$ realization is the $j$th component. Then our corresponding regression coefficient estimate would then be $\beta^*_{i,r} = \beta_{j,r}^T$, where $\beta_{1,r},..,\beta_{T,r}$ are the $r$th MCMC coefficient matrix realizations. Then, our in-sample fitted value for $y_i$ is estimated as 
\[
\left \{ \sum_{r=1}^R \frac{\beta^*_{i,r}}{R} \right\}^T E_i. 
\]
For all variational procedures, for each $y_i$, we have the corresponding variational posterior probability of the assignments as well as the posterior mean. We use the weighted sum of these coefficient matrices according to the posterior probability to compute an in-sample fitted value.

Out of sample fitted values are obtained from the posterior predictive distributions of the respective procedures. For the Gibbs sampler, we will use the predictive distribution from equation (7) in \citet{Zhang2010}. For the VSUGS procedures, we use the posterior predictive mean to fit each $y_i$. Details of the posterior predictive mean can be found in Section \ref{Post_pred}.  For prediction accuracy of the test set, we also consider the adjusted VSUGS approach, which is to use the VSUGS with the regression-type adjustment in Section \ref{RegVSUGS}.

\subsection{Energy Data}

In this example, we consider the energy efficiency data created by \citet{Tsanas2012}. This dataset is available at  http://archive.ics.uci.edu/ml/datasets.html. \citet{Tsanas2012} studied the effect of eight input variables (relative compactness, surface area, wall area, roof area, overall height, orientation, glazing area, glazing area distribution) on two output variables, namely heating load (HL) and cooling load (CL), of residential buildings. The dataset contains 768 instances. We randomly select 100 data points as the test set and use the remaining 668 data points as the training set. 

The settings for fitting our model are as follows. We set the number of basis function, $N$, at 200 and use the same kernel discussed in Section \ref{bioassay}. Our choice of $\alpha$ is 3, which is set by rounding off the average of the 10000 iterations of $\alpha$ retained from the Gibbs sampler. For comparison of performance, we run both the Gibbs sampler and the variational procedures with a fixed $\alpha = 3$. For the hyperparameters of the priors, we set $a_{\tau} = 5$, $b_{\tau} = 0.5$, $a_i = 20$, $b_i = 0.5$, $S = I_2 + \frac{1}{2} \mathbf{1}_2 \mathbf{1}_2^T$, $\nu = 3$ and $M_1,...,M_T, C_1,..,C_T$ as zero matrices. We also set the maximum number of possible components as $T = 10$. For the variational approach with batch update, we initialize each assignment variables randomly to one of the $T$ components and set its variational probability to one. For the matrix VSUGS approach, we initialize the variational parameters for the first 200 assignment variables using the VB(P\'{o}lya urn). 

\begin{table}[!ht]
\caption{In-sample accuracy of various approaches for the energy efficiency data.  \label{Energy_In} }
{\small
\centering
{
\begin{tabular}{|c|cc|c|cc|c|c|} \hline
& \multicolumn{3}{|c|}{RMSE} & \multicolumn{3}{|c|}{MAPE}  &  \\ \hline
& $y_1$ & $y_2$ & Mean& $y_1$& $y_2$& Mean & Time (mins)   \\ \hline
Gibbs Sampler  & 0.0474 & 0.0450 & 0.0462 & 0.0605 & 0.0667 & 0.0636 & 141 \\ 
VB (Stick Breaking) & 0.1853 & 0.2150 & 0.2001 & 0.2163 & 0.2469 & 0.2163 & 18  \\
VB (P\'{o}lya urn) & 0.1579 & 0.1562 & 0.1570 & 0.1954 & 0.2806 & 0.2380 & 18   \\  
Matrix VSUGS & 0.3387 & 0.3160 & 0.3273 & 0.4527 & 0.5176 & 0.4851 & 3 \\ \hline
\end{tabular}
}
}
\end{table}


\begin{table}[!ht]
\caption{Prediction accuracy of various approaches for test set for the energy efficiency data  \label{Energy_Out} }
{\small
\centering
{
\begin{tabular}{|c|cc|c|cc|c|c|} \hline
& \multicolumn{3}{|c|}{RMSE} & \multicolumn{3}{|c|}{MAPE}   \\ \hline
& $y_1$ & $y_2$ & Mean& $y_1$& $y_2$& Mean    \\ \hline
Gibbs Sampler  &  0.5736 & 0.5881 & 0.5809 & 0.7427 & 0.8055 & 0.7741 \\ 
VB (Stick Breaking) &  0.3930 & 0.4911 & 0.4421 & 0.5731 & 0.8347 & 0.7039 \\
VB (P\'{o}lya urn) & 0.4005 & 0.5000  &0.4503  &0.4784  &0.5038  &0.4911 \\
Matrix VSUGS & 0.4038 & 0.4882 & 0.4460 & 0.5364 & 0.6140 & 0.5752 \\
Adjusted VSUGS & 0.2558 & 0.3329 & 0.2943 & 0.3399 & 0.4687 & 0.4043 \\ \hline
\end{tabular}

}
}
\end{table}


Measures of in-sample fit and computation times are presented in Table 2.  
The two batch VB methods have similar in-sample fits, but the Gibbs sampling and VSUGS approaches produce quite different results.  Table 3 considers 
out-of-sample predictive accuracy.  
Table \ref{Energy_Out} shows that all three variational approaches without the regression-type adjustment perform similarly. The RMSE and MAPE of the Gibbs sampler is higher than the rest. 
Table \ref{Energy_Out} also shows that using regression-type adjustment significantly improves prediction accuracy with respect to both RMSE and MAPE. 

\subsection{Robot Arm Data}

In this subsection, we analyse the performance of our proposed algorithm on the robot arm data. This dataset, 
available from \texttt{www.gaussianprocess.org/gpml/data}, 
relates to an inverse dynamics problem for a seven degrees-of-freedom SARCOS anthropomorphic robot arm. The
dataset has 21 covariates and 7 responses and has training and test sets of sizes 44448 and 4449 respectively. The 21 covariates consist of 7 joint positions, 7 joint velocities and 7 joint accelerations and the 7 responses consist of 7 joint torques.  

We follow the same procedure used in the energy dataset for our settings. The value of $\alpha$ is set to 12. For the hyperparameters of the prior, we set $a_{\tau} = 5$, $b_{\tau} = 0.5$, $a_i = 20$, $b_i = 0.5$, $S = I_7 + \frac{1}{7} \mathbf{1}_7 \mathbf{1}_7^T$, $\nu = 8$ and $M_1,...,M_T, C_1,..,C_T$ as zero matrices. We also set the maximum number of possible components as $T = 10$. For the matrix VSUGS approach, we initialize the variational parameters of the first 500 data points using VB (P\'{o}lya urn).

\begin{table}[!ht]
\caption{In-sample accuracy of various approaches for the robot arm data.  \label{Robot_In} }
{\footnotesize
\centering
{
\begin{tabular}{|c|c|ccccccc|c|} \hline
& Method & $y_1$ & $y_2$ & $y_3$ & $y_4$ & $y_6$ & $y_6$ & $y_7$ & Mean    \\ \hline
\multirow{ 4}{*}{RMSE} & Gibbs Sampler  &  0.1439 & 0.1338 & 0.1154 & 0.0854 & 0.1615 & 0.1615 & 0.1698 & 0.1304  \\  
& VB (Stick Breaking) & 0.2058 & 0.1827 & 0.1492 & 0.1512 & 0.1917 & 0.2086 & 0.1521  & 0.1773 \\
& VB (P\'{o}lya urn)  & 0.1887 & 0.1976 & 0.1704 & 0.1287 & 0.2042 & 0.2121 & 0.1403 & 0.1774  \\  
& Matrix VSUGS & 0.5297 & 0.4665 & 0.3894 & 0.4125 & 0.4531 & 0.4630 & 0.4097 & 0.4463  \\ \hline
\multirow{ 4}{*}{MAPE} & Gibbs Sampler  &   0.6156 &  0.4852 & 0.5554 & 0.5607 & 0.5902 & 0.8916 & 0.4822 & 0.5973  \\  
& VB (Stick Breaking) & 0.8257 & 0.5441  & 0.5941  & 0.7593  & 0.6417  & 1.0223  & 0.5984  & 0.7122 \\
& VB (P\'{o}lya urn)  & 0.6466 & 0.5814 & 0.6625 & 0.6585 & 0.6605 & 0.9861 & 0.6677 & 0.6948   \\  
& Matrix VSUGS & 1.4872 & 1.0410& 1.1798& 1.5815& 1.1536 &1.5569& 1.3610 & 1.3373 \\ \hline
\end{tabular}
}
}
\end{table}


\begin{table}[!ht]
\caption{Prediction accuracy of various approaches for the robot arm data.  \label{Robot_Out} }
{\footnotesize
\centering
{
\def\arraystretch{1.1}
\begin{tabular}{|c|c|ccccccc|c|} \hline
& Method & $y_1$ & $y_2$ & $y_3$ & $y_4$ & $y_6$ & $y_6$ & $y_7$ & Mean    \\ \hline
\multirow{ 5}{*}{RMSE} & Gibbs Sampler  &  0.4099 & 0.3636 & 0.3404 & 0.3638  & 0.3598 & 0.3867 & 0.3395 & 0.3662 \\ 
& VB (Stick Breaking) & 0.5524 & 0.4909 & 0.4547 & 0.4225 & 0.5227 & 0.4653 & 0.4037 & 0.4732 \\
& VB (P\'{o}lya urn) & 0.4323 & 0.4195 & 0.3983 & 0.3851 & 0.4410 & 0.3916 & 0.3762 & 0.4063   \\  
& Matrix VSUGS & 0.4198 & 0.3650 & 0.3375 & 0.3684 & 0.3737 & 0.3867 & 0.3490  & 0.3714 \\ 
& Adjusted VSUGS & 0.3513 & 0.3105 & 0.2869 & 0.2910 & 0.3249 & 0.3479 & 0.2741 & 0.3124 \\   \cline{1-10} 
\multirow{ 5}{*}{MAPE} & Gibbs Sampler  & 1.0327 & 0.5987 & 1.5229 & 1.4137 & 1.0385 & 0.7330 &  1.2885 & 1.0897 \\
& VB (Stick Breaking) & 1.3711 & 0.7539 & 1.9556 & 1.0920 & 1.2136 & 1.1278 & 1.3825 & 1.2709 \\
& VB (P\'{o}lya urn)  & 0.7587 & 0.6397 & 0.9774 & 0.9263 & 1.4217 & 1.1021 & 1.1947 & 1.0029   \\  
& Matrix VSUGS & 0.9679 & 0.6600  & 1.8472 & 1.4600 & 1.2228 & 0.8428 & 1.4971  & 1.2140 \\
& Adjusted VSUGS & 0.7813  & 0.6097 & 1.7723 & 0.6627 & 1.1116 & 0.8206 & 1.0809 & 0.9770 \\ \hline
\end{tabular}
}
}
\end{table}

\begin{table}[!ht]
\caption{Prediction accuracy of various approaches using full training set for robot arm data.  \label{Robot_Full} }
{\footnotesize
\centering
{
\begin{tabular}{|c|c|ccccccc|c|} \hline
Method&  & $y_1$ & $y_2$ & $y_3$ & $y_4$ & $y_6$ & $y_6$ & $y_7$ & Mean    \\ \hline
\multirow{2}{*}{RMSE} & Matrix VSUGS  & 0.3930 & 0.3453 & 0.3065 & 0.3071 & 0.3402 & 0.3330 & 0.2923 & 0.3311 \\
 & Adjusted VSUGS & 0.2505 & 0.2500 & 0.2315 & 0.1626 & 0.2541 & 0.2515 & 0.1785 & 0.2255 \\ \hline
\multirow{2}{*}{MAPE} & Matrix VSUGS & 0.8212 & 0.5288 & 1.9353 & 1.0970 & 0.9967 & 1.0329 & 0.7880 & 1.0285 \\
 & Adjusted VSUGS & 0.7204 & 0.4575 & 1.8183 & 0.7178 & 0.8687 & 0.7910 & 0.7220 & 0.8708 \\ \hline
\end{tabular}
}
}
\end{table}

The performance of the in-sample fits is presented in Table \ref{Robot_In}, for fitting to a subsample of size $2000$. Corresponding out-of-sample fits for test set subsample of size 500 are shown in Table \ref{Robot_Out}.  
We also initialize the variational parameters of the first 2000 data points using VB (P\'{o}lya urn) and then run Algorithm \ref{MVSUGS} on the full training dataset of size 44448. As the dataset is large, we do not use the Gibbs sampler or the variational procedures using batch updates for the full dataset. The results of out-of-sample predictive performance are presented in Table \ref{Robot_Full}. It is clear that there is a significant prediction accuracy improvement in terms of RMSE and MAPE when using the adjusted VSUGS. 

\subsection{Computation Times}

\begin{table}[!ht]
\centering
\caption{Computation time for Robot Arm Data.  \label{Robot_time} }
{\footnotesize
{
\begin{tabular}{|c|c|c|} \hline
Method  & \multicolumn{2}{c|}{Time (minutes)}  \\ \hline
Sample size &  n = 2000  & n = 44448      \\ \hline
 Gibbs Sampler & 2422 & -  \\ 
 VB (Stick Breaking) & 368 & - \\
 VB (P\'{o}lya urn) & 365 & -   \\  
 Matrix VSUGS & 93 & 584 \\ \hline
\end{tabular}
}
}
\end{table}

All the algorithms are run on a \texttt{Mac 3.2Ghz i5} Quad core processor with code written in \texttt{matlab}. For the energy efficiency dataset, as reflected in Table \ref{Energy_In}, the computation time for Gibbs sampler, VB (Stick breaking), VB (P\'{o}lya urn) and matrix VSUGS require approximately 141, 18, 18 and 3 minutes respectively.  For the robot arm dataset, Table \ref{Robot_time} shows that the Gibbs sampler, VB (Stick breaking), VB (P\'{o}lya urn) and matrix VSUGS require 2422, 368, 365 and 93 minutes respectively. The amount of time required for matrix VSUGS to run the full dataset is 584 minutes, which is significantly shorter than the amount of time it takes for the Gibbs sampler to run on a much smaller dataset. In fact, two third of the computation time is spent on initializing the first 2000 data points.

\section{Discussion}

In this article, we study variational computational methods for fitting the matrix-variate Dirichlet process mixture model of \citet{Zhang2010}, extending the VSUGS approach 
by \citet{zhang+nyj14} for Dirichlet process mixtures of normal densities.  The method we develop is computationally efficient and especially useful as an alternative to MCMC for analysis 
of medium to large datasets.   
In order to increase prediction accuracy, we also propose a regression-type adjustment for improving predictive inference. The adjustment approach is shown to be useful in
several real applications.

\section*{Appendix A - Variational Batch update}

In the derivation of the update for each block $\gamma$ say we will write simply $E_q(\cdot)$ for the expectation with respect to the current variational posterior distribution $q$ with $\gamma$ integrated out and will not denote the dependence on the block explicitly in the notation.  The meaning will be clear from the context.  We consider the mean field update for $(\beta,\Sigma)$ first.  We have
$$q(\beta,\Sigma) \propto \exp\left(E_q\left(\sum_{i=1}^n\sum_{j=1}^T I(\delta_i=j)\log p(y_i|\beta_j^TE_i,\tau\Sigma)+\sum_{j=1}^T \log p(\beta_j|\Omega,\Sigma)+\log p(\Sigma)\right)\right).$$
Apart from constant terms not depending on $\beta,\Sigma$ we have
$$E_q(\log p(\Sigma)) =-\frac{\nu+m+1}{2}\log |\Sigma|-\frac{1}{2}\tr(S\Sigma^{-1})$$
$$E_q(\log p(\beta_j|\Omega,\Sigma)) =-\frac{N+1}{2}\log |\Sigma|-\frac{1}{2}\tr\left((E_q(\Omega^{-1})+C_j)(\beta_j-M_j)\Sigma^{-1}(\beta_j-M_j)^T\right)$$
$$E_q(\sum_{i=1}^n\sum_{j=1}^T I(\delta_i=j)\log p(y_i|\beta_j^TE_i,\tau\Sigma))=\sum_{i=1}^n \sum_{j=1}^T q_{ij}E_q(\log p(y_i|\beta_j^TE_i,\tau\Sigma)$$
where (again apart from terms not depending on $\beta,\Sigma$)
\begin{align*}
E_q(\log p(y_i|\beta_j^TE_i,\tau\Sigma))&=-\frac{1}{2}\log|\Sigma|-\frac{1}{2}E_q(\tau^{-1})(y_i-\beta_j^TE_i)^T\Sigma^{-1}(y_i-\beta_j^T E_i).
\end{align*}
This gives (again up to an additive constant)
\begin{align*}
 \log q(\beta,\Sigma) &= -\frac{(n+N+2)m+\nu+1}{2}\log |\Sigma|-\frac{1}{2}\tr(S\Sigma^{-1})  \\
& \qquad -\frac{1}{2}\sum_{j=1}^T \tr\left((E_q(\Omega^{-1})+C_j)(\beta_j-M_j)\Sigma^{-1}(\beta_j-M_j)^T\right) \\
 & \qquad -\frac{1}{2}E_q(\tau^{-1})\sum_{i=1}^n\sum_{j=1}^T q_{ij}(y_i-\beta_j^TE_i)^T\Sigma^{-1}(y_i-\beta_j^TE_i).
\end{align*}
To simplify this, write 
$$ \hat{\beta}_j=V_j^{-1}\left((E_q(\Omega^{-1})+C_j)M_j+E_q(\tau^{-1})\sum_iq_{ij}E_iy_i^T\right) $$
where
$$V_j=\left(E_q(\Omega^{-1})+C_j+E_q(\tau^{-1})\sum_i q_{ij}E_iE_i^T\right) $$
and observe that
\begin{align*}
& \tr\left((E_q(\Omega^{-1})+C_j)(\beta_j-M_j)\Sigma^{-1}(\beta_j-M_j)^T\right) \\
&\qquad + E_q(\tau^{-1})\sum_{i=1}^n q_{ij}(y_i-\beta_j^TE_i)^T\Sigma^{-1}(y_i-\beta_j^TE_i) \\
& = \tr\left((\beta_j-M_j)^T(E_q(\Omega^{-1})+C_j)(\beta_j-M_j)\Sigma^{-1}\right) \\
&\qquad + \tr\left(E_q(\tau^{-1})\sum_{i=1}^n q_{ij}(y_i-\beta_j^T E_i)(y_i-\beta_j^TE_i)^T\Sigma^{-1}\right) \\
& = \tr((\beta_j-\hat{\beta}_j)^T V_j (\beta_j-\hat{\beta}_j)\Sigma^{-1})+ \tr\left(M_j^T(E_q(\Omega^{-1})+C_j)M_j\Sigma^{-1}\right) \\
&\qquad +\tr\left(E_q(\tau^{-1})\sum_{i=1}^n q_{ij}y_iy_i^T \Sigma^{-1}\right) - \tr\left(\hat{\beta}_j^T V_j \hat{\beta}_j\Sigma^{-1}\right) .
\end{align*}
This means that up to an additive constant
\begin{align*}
& \log q(\beta,\Sigma)= \\
& -\frac{(n+N+2)m+\nu+1}{2}\log |\Sigma|-\frac{1}{2}\sum_{j=1}^T\tr((\beta_j-\hat{\beta}_j)^TV_j(\beta_j-\hat{\beta}_j)\Sigma^{-1}) \\
& -\frac{1}{2}\tr\left(\left(S+\sum_{j=1}^T \left(  E_q(\tau^{-1})\sum_{i=1}^n q_{ij}y_iy_i^T+M_j^T
(E_q(\Omega^{-1})+C_j)M_j-\hat{\beta}_j^T V_j \hat{\beta}_j\right)\right)\Sigma^{-1}\right).
\end{align*}
Then we recognize the form of $q(\beta,\Sigma)$ as being $q(\beta,\Sigma)=q(\Sigma)q(\beta|\Sigma)$ where
$q(\Sigma)$ is inverse Wishart, 
$$ IW\left(\nu+n,S+\sum_{j=1}^T \left(E_q(\tau^{-1})\sum_{i=1}^n q_{ij}y_iy_i^T+M_j^T
(E_q(\Omega^{-1})+C_j)M_j-\hat{\beta}_j^T V_j \hat{\beta}_j\right) \right) $$
and $q(\beta|\Sigma)=\prod_{j=1}^T q(\beta_j|\Sigma)$ with $q(\beta_j|\Sigma)=N_{N+1,m}(\hat{\beta}_j,V_j^{-1}\otimes \Sigma)$.

Next, let's consider the mean field update for $q(\tau)$.  We have
$$q(\tau)\propto \exp(E_q(\sum_{i=1}^n \sum_{j=1}^T I(\delta_i=j)\log p(y_i|\beta_j^TE_i,\tau\Sigma)+\log p(\tau)).$$
Apart from additive constants
$$E_q(\log p(\tau))=\log p(\tau)=-(a_{\tau}+1)\log \tau -b_\tau/\tau$$
and
$$E_q\left(\sum_{i=1}^n\sum_{j=1}^T I(\delta_i=j)\log p(y_i|\beta_j^T E_i,\tau\Sigma)\right)=\sum_{i=1}^n\sum_{j=1}^T q_{ij}E_q(\log p(y_i|\beta_j^TE_i,\tau\Sigma)$$
where
$$E_q(\log p(y_i|\beta_j^TE_i,\tau\Sigma))=\frac{-nm}{2}\log \tau-\frac{1}{2\tau} E_q((y_i-\beta_j^T E_i)\Sigma^{-1}(y_i-\beta_j^T E_i)).$$
Next,
$$E_q((y_i-\beta_j^TE_i)^T\Sigma^{-1}(y_i-\beta_j^TE_i))=E_q(E_q((y_i-\beta_j^TE_i)^T\Sigma^{-1}(y_i-\beta_j^TE_i)|\Sigma)).$$
To evaluate the inner conditional expectation, we use the following Lemma (Guptar and Nagar, p. 60).   \\
\noindent
\begin{lemma}\label{EXAX}
Suppose that $X\sim N_{p,q}(M,\Delta\otimes \psi)$. Let $A$ be a $p\times p$ matrix.  Then 
$$E(X^TAX)=\tr(\Delta A^T)\psi+M^TAM$$
\end{lemma}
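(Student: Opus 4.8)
The plan is to reduce to the zero-mean case and then compute the expectation entrywise.  First I would write $X=M+Y$ where $Y\sim N_{p,q}(0,\Delta\otimes\psi)$, so that $Y$ has mean zero and the same covariance structure as $X$.  Expanding the quadratic form,
$$X^TAX=M^TAM+M^TAY+Y^TAM+Y^TAY,$$
and since $E(Y)=0$ each of the two cross terms has zero expectation, so that $E(X^TAX)=M^TAM+E(Y^TAY)$.  It therefore remains only to evaluate $E(Y^TAY)$.

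For this I would use the second-moment structure of the matrix-variate normal, namely $E(Y_{ij}Y_{kl})=\Delta_{ik}\psi_{jl}$, which may be read off from the Kronecker factor $\Delta\otimes\psi$ or obtained from the stochastic representation $Y=\Delta^{1/2}U\psi^{1/2}$ with $U$ a $p\times q$ matrix of i.i.d.\ standard normal entries.  Writing the $(j,l)$ entry of $Y^TAY$ as $\sum_{i,k}Y_{ij}A_{ik}Y_{kl}$ and taking expectations term by term gives
$$E\bigl[(Y^TAY)_{jl}\bigr]=\sum_{i,k}A_{ik}\,\Delta_{ik}\,\psi_{jl}=\tr(\Delta A^T)\,\psi_{jl},$$
using $\sum_{i,k}A_{ik}\Delta_{ik}=\tr(A\Delta^T)=\tr(\Delta A^T)$.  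Hence $E(Y^TAY)=\tr(\Delta A^T)\psi$, and combining with the previous display yields $E(X^TAX)=\tr(\Delta A^T)\psi+M^TAM$, as claimed.

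There is no genuine analytic difficulty here; the only point requiring care is the index bookkeeping for the Kronecker structure, i.e.\ making sure the row covariance $\Delta$ is paired with the row indices acted on by $A$ and the column covariance $\psi$ with the remaining indices.  Accordingly, in the write-up I would either quote the matrix-normal second-moment identity directly or spend one line verifying $E(Y_{ij}Y_{kl})=\Delta_{ik}\psi_{jl}$ from the stochastic representation before proceeding as above.
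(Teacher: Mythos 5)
Your proof is correct. Note, however, that the paper does not prove this lemma at all: it is quoted as a known result from Gupta and Nagar (\emph{Matrix Variate Distributions}, p.~60), so there is no argument in the paper to compare against. Your route --- centring via $X=M+Y$ with $Y\sim N_{p,q}(0,\Delta\otimes\psi)$, killing the cross terms by $E(Y)=0$, and then computing $E\bigl[(Y^TAY)_{jl}\bigr]=\sum_{i,k}A_{ik}E(Y_{ij}Y_{kl})=\tr(\Delta A^T)\psi_{jl}$ from the second-moment identity $E(Y_{ij}Y_{kl})=\Delta_{ik}\psi_{jl}$ --- is a complete, self-contained and elementary derivation of the cited result. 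Your index convention is consistent with the paper's definition of $N_{s,t}(C,V\otimes W)$, in which the first Kronecker factor is the row covariance, so pairing $\Delta$ with the indices contracted against $A$ is exactly right; the stochastic representation $Y=\Delta^{1/2}U\psi^{1/2}$ is a clean one-line justification of that identity. In short, you have supplied the proof the paper delegates to a reference, and nothing is missing.
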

\noindent
Using Lemma \ref{EXAX} and some simple algebra we obtain
$$E_q((y_i-\beta_j^TE_i)^T\Sigma^{-1}(y_i-\beta_j^TE_i)|\Sigma)=(y_i-\hat{\beta}_j^T E_i)^T\Sigma^{-1}(y_i-\hat{\beta}_j^T E_i)+mE_i^TV_j^{-1} E_i.$$
Hence
$$E_q((y_i-\beta_j^TE_i)^T\Sigma^{-1}(y_i-\beta_j^T E_i))=(y_i-\hat{\beta}_j^T E_i)^TE_q(\Sigma^{-1})(y_i-\hat{\beta}_j^TE_i)+mE_i^TV_j^{-1}E_i.$$
So apart from additive constants
\begin{align*}
\log q(\tau) & =-(a_\tau+\frac{nm}{2}+1)\log \tau \\
& \qquad -\frac{1}{\tau} \left\{ b_\tau+\frac{1}{2}\sum_{i=1}^n\sum_{j=1}^T q_{ij}((y_i-\hat{\beta}_j^TE_i)^TE_q(\Sigma^{-1})(y_i-\hat{\beta}_j^TE_i)+mE_i^TV_j^{-1}E_i)) \right\}.
\end{align*}
Hence we recognize that $q(\tau)$ is inverse gamma, 
$$ IG(a_\tau+\frac{nm}{2},b_\tau+\frac{1}{2}\sum_{i=1}^n\sum_{j=1}^T q_{ij}((y_i-\hat{\beta}_j^TE_i)^TE_q(\Sigma^{-1})(y_i-\hat{\beta}_j^TE_i)+mE_i^TV_j^{-1}E_i)). $$
Next we consider the variational update for $\omega$.  We have that apart from additive constants
\begin{align*}
\log q(\omega) &=-\sum_{i=1}^{N+1}(a_i+1)\log\omega_i-\sum_{i=1}^{N+1} \frac{b_i}{\omega_i}-\frac{mT}{2}\sum_{i=1}^{N+1}\log \omega_i \\
& \qquad -\frac{1}{2}\sum_{i=1}^{N+1}\frac{1}{\omega_i}\sum_{j=1}^T E_q((\beta_{j,i}-M_{j,i})^T\Sigma^{-1}(\beta_{j,i}-M_{j,i}))
\end{align*}
where $\beta_{j,i}$ and $M_{j,i}$ denote the $i$th rows of $\beta_j$ and $M_j$ respectively.  Writing
$$E_q((\beta_{j,i}-M_{j,i})^T\Sigma^{-1}(\beta_{j,i}))=E_q(E_q((\beta_{j,i}-M_{j,i})^T\Sigma^{-1}(\beta_{j,i})|\Sigma))$$
and noting that $\beta_{j,i}|\Sigma \sim N(\hat{\beta}_{j,i},\omega_{ij}' \Sigma)$ where $\hat{\beta}_{j,i}$ is the $i$th row of $\hat{\beta}_j$ and
$\omega_{ij}'$ is the $i$th diagonal element of $V_j^{-1}$ we have
\begin{align*}
 E_q((\beta_{j,i}-M_{j,i})^T\Sigma^{-1}(\beta_{j,i})) &= (\hat{\beta}_{j,i}-M_{j,i})^T\Sigma^{-1}(\hat{\beta}_{j,i}-M_{j,i})+\tr(\Sigma^{-1}\omega_{ij}'\Sigma) \\
& = (\hat{\beta}_{j,i}-M_{j,i})^TE_q(\Sigma^{-1})(\hat{\beta}_{j,i}-M_{j,i})+\omega_{ij}'m.
\end{align*}
So apart from additive constants
\begin{align*}
\log q(\omega) &=-\sum_{i=1}^{N+1}(a_i+mT/2+1)\log \omega_i \\
&\qquad -\sum_{i=1}^{N+1}\frac{1}{\omega_i}(b_i+\frac{1}{2}\sum_{j=1}^T \left((\hat{\beta}_{j,i}-M_{j,i})^TE_q(\Sigma^{-1})(\hat{\beta}_{j,i}-M_{j,i})+\omega_{ij}'m\right)
\end{align*}
and so we recognize that $q(\omega)=\prod_{i=1}^{N+1}q(\omega_i)$
where $q(\omega_i)$ is inverse gamma, 
\begin{align}\label{batchomega}
 IG(a_i+mT/2,b_i+\frac{1}{2}\sum_{j=1}^T \left((\hat{\beta}_{j,i}-M_{j,i}) E_q(\Sigma^{-1})(\hat{\beta}_{j,i}-M_{j,i})^T+m \omega_{ij}'\right)). 
\end{align}

\section*{Appendix B - Sequential update for $\delta_i$}

\noindent
To help evaluate the integral we use the following lemma. 

\noindent
\begin{lemma}\label{MatrixLemma}
\begin{enumerate}[(a)]
\item Let $Z$, $V$, $W$ and $C$ be matrices with $Z$ $s\times t$, $V>0$ $s\times s$, $W>0$ $t\times t$ and $C$ $s\times t$.  Then
\begin{align*}
\int \exp(\tr(-\frac{1}{2}(V^{-1}ZW^{-1}Z^T-2V^{-1}CW^{-1}Z^T))dZ \\
=(2\pi)^{st/2}|V|^{t/2}|W|^{s/2}\exp(\tr(\frac{1}{2}V^{-1}CW^{-1}C^T)).
\end{align*}
\item Suppose $\psi$, $A$ are matrices with $\psi >0$ 
$m\times m$ 
and $A>0$ $m\times m$. Let $a>0$ be a constant.  Then
\begin{align*}
& \int | \psi |^{-(a+m+1)/2}
\exp(\tr(-\frac{1}{2}A\psi^{-1})) d\psi=|A|^{-a/2} 2^{am/2} \Gamma_m (a/2)
\end{align*}
where $\Gamma_m(x)$ is the multivariate gamma function, $\Gamma_m(x)=\pi^{m(m-1)/4}\prod_{i=1}^m \Gamma(x+(1-i)/2)$.  \\
\end{enumerate}
\end{lemma}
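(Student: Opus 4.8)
The plan for both parts is the same: recognize the integrand, up to a multiplicative constant that can be pulled outside, as an unnormalized probability density whose normalizing constant is already available, so the integral equals the reciprocal of that constant.

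For part (a) I would begin by completing the square in $Z$. Since $V$ and $W$ are symmetric and the trace is cyclic, $\tr(V^{-1}ZW^{-1}C^T)=\tr(V^{-1}CW^{-1}Z^T)$, and hence
\[
\tr\!\left(V^{-1}(Z-C)W^{-1}(Z-C)^T\right)=\tr(V^{-1}ZW^{-1}Z^T)-2\tr(V^{-1}CW^{-1}Z^T)+\tr(V^{-1}CW^{-1}C^T).
\]
Rearranging, the exponent in the integrand equals $-\tfrac12\tr(V^{-1}(Z-C)W^{-1}(Z-C)^T)+\tfrac12\tr(V^{-1}CW^{-1}C^T)$, so the second term comes out of the integral as the asserted factor $\exp(\tr(\tfrac12 V^{-1}CW^{-1}C^T))$. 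By translation invariance of Lebesgue measure on the $s\times t$ matrices, what remains is $\int\exp\!\left(-\tfrac12\tr(V^{-1}ZW^{-1}Z^T)\right)dZ$, which is precisely the reciprocal of the normalizing constant of the $N_{s,t}(0,V\otimes W)$ density displayed in Section \ref{DPMMmodel}, namely $(2\pi)^{st/2}|V|^{t/2}|W|^{s/2}$; combining the two factors gives the claim. If a fully self-contained argument is preferred, one can instead vectorize via $\tr(V^{-1}ZW^{-1}Z^T)=\mathrm{vec}(Z)^T(W^{-1}\otimes V^{-1})\mathrm{vec}(Z)$ and apply the ordinary multivariate Gaussian integral together with $|W^{-1}\otimes V^{-1}|=|V|^{-t}|W|^{-s}$.

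For part (b) I would recognize $|\psi|^{-(a+m+1)/2}\exp(-\tfrac12\tr(A\psi^{-1}))$ as the unnormalized density of an inverse-Wishart distribution $IW(a,A)$ on the cone of positive definite $m\times m$ matrices, so the integral is the reciprocal of that distribution's normalizing constant. To make this explicit, the substitution $W=\psi^{-1}$, whose Jacobian on positive definite matrices is $|\psi|^{-(m+1)}$, converts the integral into the standard Wishart integral $\int_{W>0}|W|^{(a-m-1)/2}\exp(-\tfrac12\tr(AW))\,dW = 2^{am/2}|A|^{-a/2}\Gamma_m(a/2)$, a classical identity (obtainable e.g.\ from the Bartlett decomposition, and tabulated in Gupta and Nagar). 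Either route delivers the claimed value $|A|^{-a/2}2^{am/2}\Gamma_m(a/2)$.

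The work is essentially bookkeeping rather than conceptual: the points needing care are the use of the symmetry of $V$ and $W$ to merge the cross terms in (a), and, in (b), quoting correctly the Jacobian of the matrix-inversion map and the Wishart normalizing constant rather than re-deriving them. I would simply cite Gupta and Nagar for these standard facts, as the paper already does for Lemma \ref{EXAX}.
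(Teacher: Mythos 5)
Your argument is correct and is essentially the same as the paper's proof, which also recognizes the integrand in (a) as an unnormalized matrix-variate normal density and the integrand in (b) as an unnormalized inverse-Wishart density, then reads off the known normalizing constants. Your completion of the square and the substitution $W=\psi^{-1}$ simply spell out explicitly the "rearranging" the paper leaves implicit.
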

\begin{proof}
Part a) of the lemma follows easily from the fact that the integrand is an unnormalized matrix normal distribution.  Using the fact that the integral of the corresponding normalized density is one and rearranging gives the result.  Part b) follows in a similar way by noting that the integrand is an unnormalized inverse Wishart density.  
\end{proof}

Using Lemma \ref{MatrixLemma} to help evaluate the integral we get
\begin{align*}
& \int p(y_i|\beta_j^TE_i,\tau\Sigma)q_{i-1}(\tau)q_{i-1}(\beta_j|\Sigma)q_{i-1}(\Sigma)d\beta_jd\Sigma d\tau  \\
& =  \int\int(2\pi\tau)^{-m/2}|\Sigma|^{-1/2}q_{i-1}(\tau)q_{i-1}(\Sigma)\exp\left\{ -\frac{1}{2\tau}y_i^T\Sigma^{-1}y_i \right\} (2\pi)^{-(N+1)m/2}| \\
&\qquad\qquad \Omegajione|^{-m/2}|\Sigma|^{-(N+1)/2} \exp(\tr(-\frac{1}{2} (\Omegajione)^{-1} \bhatjione \Sigma^{-1} (\bhatjione)^T)) \\
&\qquad \int\exp \left\{ \tr \left[ -\frac{1}{2}\left((\frac{1}{\tau}E_iE_i^T+(\Omegajione)^{-1})\beta_j\Sigma^{-1}\beta_j^T \right.\right.\right. \\
&\qquad\qquad \left.\left.\left. -2\left(\frac{1}{\tau}E_iE_i^T+ (\Omegajione)^{-1}\right)\bar{\beta}_{j}\Sigma^{-1}\beta_j^T\right) \right] \right\} d\beta_jd\Sigma d\tau
\end{align*}
where $\bar{\beta}_j= \left(\frac{1}{\tau}E_iE_i^T+ (\Omegajione)^{-1}\right)^{-1} \left(\frac{1}{\tau}E_iy_i^T+  (\Omegajione)^{-1} \bhatjione \right)$.
Next, use part (a) of Lemma \ref{MatrixLemma} and the explicit form for $q_{i-1}(\Sigma)$ to get
\begin{align*}
& \int (2\pi\tau)^{-m/2}q_{i-1}(\tau)|\Omegajione|^{-m/2}|\frac{1}{\tau}E_iE_i^T +(\Omegajione)^{-1}|^{-m/2}\frac{|\Sione|^{\nuione/2}}{2^{\nuione m/2}\Gamma_m(\nuione/2)} \\
& \int |\Sigma|^{-(\nuione+m+2)/2}\exp(\tr(-\frac{1}{2}(\Sione +\frac{1}{\tau}y_iy_i^T+(\bhatjione)^T (\Omegajione)^{-1} \bhatjione \\
& \qquad -\bar{\beta}_j^T(\frac{1}{\tau}E_iE_i^T+
(\Omegajione)^{-1})\bar{\beta}_j)\Sigma^{-1}))d\Sigma d\tau.
\end{align*}
Using part (b) of Lemma \ref{MatrixLemma}, this is equal to
\begin{align*}
& \int (2\pi \tau)^{-\frac{m}{2}}q_{i-1}(\tau)|\Omegajione|^{-\frac{m}{2}}|\frac{1}{\tau}E_iE_i^T+(\Omegajione)^{-1}|^{-\frac{m}{2}} \\
&\qquad \frac{|\Sione|^{\nuione/2}\Gamma_m((\nuione+1)/2)}{2^{\nuione \frac{m}{2}}\Gamma_m(\nuione/2)} 2^{(\nuione+1)\frac{m}{2}}\\
&\qquad   \left |\Sione +\frac{1}{\tau}y_iy_i^T+ (\bhatjione)^T (\Omegajione)^{-1}\bhatjione -\bar{\beta}_j^T(\frac{1}{\tau}E_iE_i^T+ (\Omegajione)^{-1})\bar{\beta}_j \right|^{-\frac{\nuione+1}{2}}d\tau.
\end{align*}

\section*{Appendix C - Posterior predictive distribution}

We now evaluate the integral in \eqref{predictive}. Following the same steps to get \eqref{qdelta}, we have
\begin{align*}
& \int p( y_0 | y_{1:n}, \tau, \Sigma, \beta_j, \delta_0 = j) q_n(\tau ) q_n(\beta_j) q_n(\Sigma)    d\beta_{j} d\tau d\Sigma  \\
& =  \left(\frac{2\pi}{\mutaun} \right)^{-m/2}|\Omegajn|^{-m/2}|\mutaun E_0 E_0^T+ (\Omegajn)^{-1}|^{-m/2}\frac{2^{\frac{(\nun+1)m}{2}} \Gamma_m\left(\frac{\nun+1}{2}\right)}{2^{\nun m/2}\Gamma_m(\nun/2)}  |\Sn|^{\nun/2}\\
&  |S^{(n)}+\mutaun y_0y_0^T+(\hat{\beta}^{(n)}_{j})^T(\Omegajn)^{-1} \hat{\beta}^{(n)}_{j}-\tilde{\beta}_j^T( \mutaun E_0E_0^T+ (\Omegajn)^{-1})\tilde{\beta}_j|^{-\frac{\nun+1}{2}}.
\end{align*}
where $\tilde{\beta}_j=(\mutaun  E_0E_0^T+(\Omegajn)^{-1})^{-1}(\mutaun E_0y_0^T+(\Omegajn)^{-1} \bhatjn)$. 
We now show that after simplification of the term inside the determinant we obtain the multivariate $t$-distribution.
We require here the matrix determinant lemma, which states the following.
\begin{lemma}\label{Matrix_det}
For any invertible matrix $A$ and vector $u$ and $v$ we have
\[
|A + uv^T| = (1  + v^T A^{-1} u) |A|.
\]
\end{lemma}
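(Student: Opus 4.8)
The plan is to prove the identity via the standard block-matrix / Schur complement argument, which avoids any eigenvalue bookkeeping and works verbatim for any invertible $A$ (which is all that is assumed).

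First I would introduce the $(n+1)\times(n+1)$ block matrix
\[
B=\begin{pmatrix} A & u \\ -v^T & 1\end{pmatrix},
\]
and evaluate $\det B$ in two ways using the formula for the determinant of a block matrix in terms of a Schur complement. Reducing with respect to the $(1,1)$-block $A$, which is invertible by hypothesis, gives
\[
\det B = \det(A)\,\bigl(1-(-v^T)A^{-1}u\bigr) = \det(A)\,(1+v^TA^{-1}u).
\]
Reducing instead with respect to the $(2,2)$-block, the scalar $1$, gives
\[
\det B = 1\cdot\det\bigl(A-u\cdot 1^{-1}\cdot(-v^T)\bigr) = \det(A+uv^T).
\]
Equating the two expressions yields the claimed identity.

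The only point requiring care is the sign convention in the Schur complement formula together with the placement of the $-v^T$ block, chosen precisely so that both reductions produce the factors in the stated form; this is routine. If one prefers not to invoke the block-determinant formula, an equivalent route is to write $\det(A+uv^T)=\det(A)\det(I+A^{-1}uv^T)$ and then note that $I+A^{-1}uv^T$ is a rank-one perturbation of the identity whose only nontrivial eigenvalue is $1+v^TA^{-1}u$, with eigenvector $A^{-1}u$, all remaining eigenvalues being $1$; the product of eigenvalues gives the same conclusion. There is no genuine obstacle here, as the result is classical — the write-up amounts to selecting one of these two short derivations and presenting it cleanly.
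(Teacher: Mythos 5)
Your argument is correct and complete: the two Schur-complement reductions of the bordered matrix $\begin{pmatrix} A & u \\ -v^T & 1\end{pmatrix}$ do yield $\det(A)\,(1+v^TA^{-1}u)$ and $\det(A+uv^T)$ respectively, and only the invertibility of $A$ is used, matching the hypothesis. There is nothing in the paper to compare against at the level of proof: the paper simply states this as the classical matrix determinant lemma and uses it in Appendix C without any derivation, so your write-up supplies a proof the paper omits, and the block-determinant route is as clean a choice as any. One small remark on your fallback argument: the phrase ``only nontrivial eigenvalue $1+v^TA^{-1}u$ with eigenvector $A^{-1}u$'' is slightly loose in the degenerate case $v^TA^{-1}u=0$ (there $A^{-1}uv^T$ is nilpotent and all eigenvalues of $I+A^{-1}uv^T$ equal $1$), but the determinant conclusion is unaffected, and in any case your primary Schur-complement derivation does not rely on it.
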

Letting $\Lambda = (\mutaun  E_0E_0^T+(\Omegajn)^{-1})$, it is clear that
\begin{align*}
& \tilde{\beta}_j^T \left(\mutaun  E_0E_0^T+ (\Omegajn)^{-1} \right)\tilde{\beta}_j \\
&= \left(\mutaun  E_0y_0^T+  (\Omegajn)^{-1} \bhatjn \right)^T \Lambda^{-1} \Lambda \Lambda^{-1}  \left(\mutaun  E_0y_0^T+  (\Omegajn)^{-1} \bhatjn \right) \\
&= (\mutaun)^2 y_0 E_0^T \Lambda^{-1} E_0 y_0^T +\mutaun  y_0 E_0^T \Lambda^{-1}  (\Omegajn)^{-1}\bhatjn  \\
&\qquad + \mutaun (\bhatjn)^T (\Omegajn)^T \Lambda^{-1} E_0 y_0^T + (\bhatjn)^T (\Omegajn)^T \Lambda^{-1} \Omegajn \bhatjn.
\end{align*}
This gives us
\begin{align*}
& \biggl|\Sn + \mutaun y_0y_0^T+ (\bhatjn)^T (\Omegajn)^{-1}\bhatjn-\tilde{\beta}_j^T(\mutaun E_0E_0^T+ (\Omegajn)^{-1})\tilde{\beta}_j \biggr|^{-\frac{\nun+1}{2}} \\
&= \biggl|\Sn+ \mutaun y_0y_0^T+ (\bhatjn)^T (\Omegajn)^{-1} \bhatjn-  (\mutaun)^2 y_0 E_0^T \Lambda^{-1} E_0 y_0^T \\
&\qquad  - \mutaun y_0 E_0^T \Lambda^{-1}  \Omegajn \bhatjn - \mutaun (\bhatjn)^T (\Omegajn)^T \Lambda^{-1} E_0 y_0^T \\
&\qquad   -  (\bhatjn)^T (\Omegajn)^T \Lambda^{-1} \Omegajn \bhatjn \biggr|^{-\frac{\nun+1}{2}} \\
&= \biggl|\Sn + \mutaun y_0 (I -  \mutaun E_0^T \Lambda^{-1} E_0 ) y_0^T  - \mutaun y_0 E_0^T \Lambda^{-1} \Omegajn \bhatjn    \\
&\qquad  - \mutaun (\bhatjn)^T (\Omegajn)^T \Lambda^{-1} E_0 y_0^T -  (\bhatjn)^T (\Omegajn)^T \Lambda^{-1} \Omegajn \bhatjn \\
&\qquad + (\bhatjn)^T (\Omegajn)^{-1}\bhatjn \biggr|^{-\frac{\nun +1}{2}}.
\end{align*}
Let $S_* = \Sn  -  (\bhatjn)^T (\Omegajn)^T \Lambda^{-1} \Omegajn \bhatjn + (\bhatjn)^T (\Omegajn)^{-1} \bhatjn$ (where we suppress dependence on $j$ in the notation).  
Using Lemma \ref{Matrix_det}, we have
\begin{align*}
\left|1 + \mutaun y_0^T S_*^{-1} y_0 (1 -  \mutaun E_0^T \Lambda^{-1} E_0 ) - 2\mutaun E_0^T \Lambda^{-1} \Omegajn \bhatjn S_*^{-1} y_0  \right|^{-\frac{\nun +1}{2}} \left|S_* \right|^{-\frac{\nun +1}{2}}.
\end{align*}
As the term inside the absolute value is a quadratic function in $y_0$, the predictive distribution is a multivariate $t$-distribution. Letting $\mathcal{A} = \mutaun (1 -  \mutaun E_0^T \Lambda^{-1} E_0 )    S_*^{-1}$, $\mathcal{B} =  -  2 \mutaun  S_*^{-1} (\bhatjn)^T \Omegajn \Lambda^{-1}  E_0$ (again suppressing dependence on $j$ in the notation), we have
\begin{align*}
&\left |1 + \mutaun y_0^T S_*^{-1} y_0 (1 -  \mutaun E_0^T \Lambda^{-1} E_0 ) -   2\mutaun y_0^T S_*^{-1} (\bhatjn)^T \Omegajn \Lambda^{-1}  E_0   \right|  \\
&=  \left|1 + (y_0 + \frac{1}{2} \mathcal{A}^{-1} \mathcal{B} )^T \mathcal{A}   (y_0 + \frac{1}{2} \mathcal{A}^{-1} \mathcal{B})  -\frac{1}{4} \mathcal{B}^T \mathcal{A}^{-1} \mathcal{B} \right| \\
&= \left|1 + (y_0 + \frac{1}{2} \mathcal{A}^{-1} \mathcal{B})^T \mathcal{A}  (1  -\frac{1}{4} \mathcal{B}^T \mathcal{A}^{-1} \mathcal{B} )^{-1}  (y_0 + \frac{1}{2} \mathcal{A}^{-1} \mathcal{B}) \right| \left|1 -\frac{1}{4} \mathcal{B}^T \mathcal{A}^{-1} \mathcal{B} \right|.
\end{align*}
Thus, the $j$th component in the expression for the posterior predictive density \eqref{predictive} is a multivariate $t$-density with location $-\frac{1}{2} \mathcal{A}^{-1} \mathcal{B}$ and variance $\frac{1}{\nun - m - 1} \left\{ \mathcal{A} ( 1 -\frac{1}{4} \mathcal{B}^T \mathcal{A}^{-1} \mathcal{B})^{-1} \right\}^{-1}$ .  Hence \eqref{predictive} is approximately
a mixture of multivariate $t$-densities.

\section*{Appendix D - Variational Lower bound}

Next we compute the variational lower bound on $\log p(y_{1:n})$, which is defined as
\[
\mathcal{L} (q) = E_q \left\{ \log  \left[ p \left(y_{1:n},  \beta_{1:T}, \delta_{1:n}, \tau, \omega_{1:(N+1)}, \Sigma \right) \right] \right\} - E_q \left\{ \log  \left[ q \left( \beta_{1:T}, \delta_{1:n}, \tau, \omega_{1:(N+1)}, \Sigma \right) \right] \right\}.
\]
Similar to the approach used to calculate the lower bound for the normal mixture model in \citet{zhang+nyj14}, we approximate $\mathcal{L} (q)$ recursively. Let $\theta_t = (\beta_t, \tau, \omega_{1:N+1}, \Sigma)$ and using  $r_{ij}$ as an approximation to $p(\delta_i | y_{i:i-1})$, we have
\begin{align*}
\mathcal{L} (q) &= E_{q_i} \left\{ \log(q_{i-1} (\beta_{1:T}, \tau, \omega_{1:(N+1)}, \Sigma) p(\delta_i | y_{i:i-1}) p(y_i | \delta_i , \theta)) \right\} - E_{q_i} \left\{\log ( q_{i} (\theta) q_i (\delta_i) )\right\} \\
&= E_{q_i} \biggl\{ \log (q_{i-1} (\tau)) + \log(q_{i-1} (\Sigma)) + \sum_{j=1}^{N+1} \log(q_{i-1} (\omega_j) + \sum_{t=1}^T \log(q_{i-1} (\beta_t)) + \log(r_{ij}) +   \\
&\qquad   \log(y_i | \theta_{ \delta_i }) \biggr\} -  E_{q_i} \left\{ q_i (\tau) +  \log(q_{i} (\Sigma)) + \sum_{j=1}^{N+1} \log(q_{i} (\omega_j) \right. \\
&\qquad \left. + \sum_{t=1}^T \log(q_{i} (\beta_t)) + \log(q_i (\delta_i)) \right\}.  
\end{align*}
Suppressing the expectations with respect to $q_i$, evaluating the terms involving $\tau$ and $\omega_j$ gives us
\begin{align*}
E_{q_i} \left\{ \log (q_{i-1} (\tau)) \right\} - E_{q_i} \left\{ \log (q_{i} (\tau)) \right\}   &=  (\ataui - \atauione) \psi (\ataui) - \log (\Gamma (\ataui)) + \log (\Gamma (\atauione)) \\
& \qquad + \atauione ( \log (\btauione) - \log (\btaui)) + \ataui \frac{\btaui - \btauione}{\btaui},
\end{align*}
\begin{align*}
E_{q_i} \left\{ \log (q_{i-1} (\omega_j)) \right\} - E_{q_i} \left\{ \log (q_{i} (\omega_j)) \right\}   &=  (\aji - \ajione) \psi (\aji) - \log (\Gamma (\aji)) + \log (\Gamma (\ajione)) \\
& \qquad + \ajione ( \log (\bjione) - \log (\bji)) + \aji \frac{\bji - \bjione}{\bji}.
\end{align*}
Since $q_i (\Sigma)$ follows an inverse Wishart distribution with degrees of freedom $v^{(i)}$ and scale matrix $S^{(i)}$, we obtain
\begin{align*}
E_{q_i} \left\{ \log(q_{i-1} (\Sigma)) \right\} &= \frac{\nuione}{2} \log|\Sione| - \frac{\nuione m}{2} \log(2) - \log \Gamma_m (\frac{\nuione}{2})   \\
&\qquad  - \frac{1}{2} tr \left\{ \Sione E_{q_i} (\Sigma^{-1}) \right\} - \frac{\nuione + m + 1}{2} E_{q_i} \left( \log |\Sigma| \right) \\
&= \frac{\nuione}{2} \log|\Sione| - \frac{\nuione m}{2} \log(2) - \log \Gamma_m (\frac{\nuione}{2})  \\
&\qquad - \frac{1}{2} tr \left\{ \Sione (\Si)^{-1} \nui  \right\} \\
&\qquad  - \frac{\nuione + m + 1}{2} \left\{ - \psi_m \left(\frac{\nui}{2} \right) - m \log (2) + \log |\Si| \right\} \\
\end{align*}
\begin{align*}
E_{q_i} \left\{ \log(q_{i} (\Sigma)) \right\} &= \frac{\nui}{2} \log|\Si| - \frac{\nui m}{2} \log(2) - \log \Gamma_m (\frac{\nui}{2}) - \frac{1}{2} tr \left\{ \Si E_{q_i} (\Sigma^{-1}) \right\} \\
&\qquad  - \frac{\nui + m + 1}{2} E_{q_i} \left( \log |\Sigma| \right)  \\
&= \frac{\nui}{2} \log|\Si| - \frac{\nui m}{2} \log(2) - \log \Gamma_m (\frac{\nui}{2})  - \frac{1}{2} tr \left\{ \Si (\Si)^{-1} \nui  \right\} \\
&\qquad  - \frac{\nui + m + 1}{2} \left\{ - \psi_m \left(\frac{\nui}{2} \right) - m \log (2) + \log |\Si| \right\}. \\
\end{align*}
Next, observe that if $Z$ is matrix-variate normal distribution $N_{s,t} (C, V \otimes W)$ , then $E(Z^T B Z) = W \tr(V B^T) + C^T B C$. Moreover, for $A$ a $r \times s$ matrix, we have $AX \sim N_{r,t} (AC, AVA^T \otimes W)$ and $E(AX \Lambda A^T X^T) = AVA^T \tr(\Lambda W) + (AC)^T \Lambda  AC$. Therefore, we have
\begin{align*}
E_{q_i} \left\{  \log(q_{i-1} (\beta_j)) \right\} &= - \frac{(N+1)m}{2} \log (2\pi) - \frac{N+1}{2} E_{q_i} ( \log|\Sigma| ) - \frac{m}{2} \log |(V^{(i-1)}_j)^{-1}| \\
&\qquad - E_{q_i} \left\{ \frac{1}{2} \tr \left[\Sigma^{-1} (\beta_j - \bhatjione)^T V^{(i-1)}_j (\beta_j - \bhatjione)  \right] \right\} \\
&= - \frac{(N+1)m}{2} \log (2\pi) - \frac{N+1}{2} E_{q_i} ( \log|\Sigma| ) - \frac{m}{2} \log |(V^{(i-1)}_j)^{-1}| \\
&\qquad -  \frac{1}{2} \tr \biggl[ \nui (\Si)^{-1} (\bhatji - \bhatjione)^T V^{(i-1)}_j (\bhatji - \bhatjione)   \\
&\qquad + E_{q_i} \left( \Sigma^{-1} \Sigma \tr( (V_j^{(i)})^{-1} V_j^{(i-1)} ) \right)  \biggr] \\
&= - \frac{(N+1)m}{2} \log (2\pi) - \frac{N+1}{2} E_{q_i} ( \log|\Sigma| ) - \frac{m}{2} \log |(V^{(i-1)}_j)^{-1}| \\
&\qquad -  \frac{1}{2} \tr \biggl[ \nui (\Si)^{-1} (\bhatji - \bhatjione)^T V^{(i-1)}_j (\bhatji - \bhatjione) \\
& \qquad  + I_m \tr( (V_j^{(i)})^{-1} V_j^{(i-1)} ) \biggr] 
\end{align*}
\begin{align*}
E_{q_i} \left\{  \log(q_{i} (\beta_j)) \right\} &= - \frac{(N+1)m}{2} \log (2\pi) - \frac{N+1}{2} E_{q_i} ( \log|\Sigma| ) - \frac{m}{2} \log |(V^{(i)}_j)^{-1}| \\
&\qquad - E_{q_i} \left\{ \frac{1}{2} \tr \left[(V^{(i)}_j) (\beta_j - \bhatji)^T \Sigma^{-1} (\beta_j - \bhatji)  \right] \right\} \\
&= - \frac{(N+1)m}{2} \log (2\pi) - \frac{N+1}{2} E_{q_i} ( \log|\Sigma| )  \\
&\qquad - \frac{m}{2} \log |(V^{(i)}_j)^{-1}| - \frac{1}{2} m(N+1)
\end{align*}
\begin{align*}
E_{q_i} \left\{ \log p(y_i | \theta_{\delta_i}) \right\} &= \sum_{j=1}^T \hat{q}_i (\delta_i = j) E_{q_i} \left\{ \log(y_i | \theta_{j} ) \right\}
\end{align*}
where
\begin{align*}
E_{q_i} \left\{ \log p(y_i | \theta_{j} ) \right\} &= -\frac{m}{2} \log (2 \pi) - \frac{m}{2} E_{q_i} \left\{ \log \tau \right\} - \frac{1}{2} E_{q_i} \left\{ \log |\Sigma| \right\} \\
&\qquad -\frac{1}{2} E_{q_i} \left\{ \frac{1}{\tau} \right\} E_{q_i} \left\{ y_i^T\Sigma^{-1} y_i - 2 E_i^T \beta_j \Sigma^{-1} y_i + E_i^T \beta_j \Sigma^{-1} \beta_j^T E_i \right\} \\ 
&= -\frac{m}{2} \log (2 \pi) -  \frac{m}{2}  \left\{  \log (\btaui) - \psi(\ataui) \right\} \\
&\qquad -  \frac{1}{2}  \left\{ - \psi_m \left(\frac{\nui}{2} \right) - m \log (2) + \log |\Si| \right\}  \\
&\qquad -\frac{1}{2} \frac{\ataui}{\btaui}  \left\{ \nui y_i^T (\Si)^{-1} y_i - 2 \nui  E_i^T \hat{\beta}^{(i)}_j (\Si)^{-1} y_i \right. \\
&\qquad \left. + m  E_i^T \Omega_j^{(i)} E_i  + (E_i^T \hat{\beta}^{(i)}_j)^T (\nui (\Si)^{-1}) (E_i^T \hat{\beta}^{(i)}_j) \right\}. 
\end{align*}
Finally, we have
\begin{align*}
E_{q_i} \left\{ \log(q_i (\delta_i) \right\} &= \sum_{j=1}^{T} \hat{q}_i (\delta_i = j) \log (\hat{q}_i (\delta_i = j)).
\end{align*}

\bibliographystyle{chicago}
\bibliography{matrixvariatedp}

\end{document}